\documentclass[journal]{IEEEtran}
\usepackage{graphicx}
\usepackage{amsfonts,bm}
\usepackage{array}
\usepackage{booktabs}
\usepackage{multirow}
\usepackage{threeparttable}
\usepackage{diagbox}
\usepackage{algorithm}
\usepackage[noend]{algpseudocode}
\usepackage{silence}
\usepackage{subcaption}
\usepackage{color}
\usepackage[dvipsnames]{xcolor}

\usepackage{amsmath,amssymb,amsthm}
\newtheorem{theorem}{Theorem}
\newtheorem{corollary}{Corollary}

\usepackage[hyphens]{url}
\usepackage{cite}
\usepackage{hyperref}
\usepackage{cleveref}
\makeatletter
\renewcommand{\ALG@beginalgorithmic}{\footnotesize}
\makeatother

\begin{document}
\pagenumbering{arabic} 
\title{Amplitude-Independent Robust Snapshot 6-D Radio SLAM via a Unified Angle-Delay Formulation}
\author{
Shengqiang~Shen,~\IEEEmembership{Member,~IEEE}, Aoyun~Hao, Weihao~Geng, Lei~Yang, Shiyin~Li, and~Henk Wymeersch,~\IEEEmembership{Fellow,~IEEE}
}
\maketitle
\begin{abstract}
This paper addresses bistatic snapshot radio SLAM, in which a user equipment (UE) with unknown 6-D pose and clock bias is localized and environmental landmarks are reconstructed from a single multipath channel snapshot. Under mixed line-of-sight (LoS)/non-line-of-sight (NLoS) propagation, existing robust snapshot SLAM methods are mainly developed or validated in planar/2-D settings and often use path-amplitude or path-loss information for LoS handling, which makes them sensitive to calibration errors and propagation-model mismatch. We propose an amplitude-independent robust radio SLAM method built on a unified angle-delay formulation for LoS and single-bounce NLoS inlier paths. In the coarse stage, the method estimates the UE state and selects geometrically consistent inliers directly from angle-delay measurements, without amplitude-based LoS preclassification or path-wise latent variables; the formulation is further extended to general 3-D/6-D pose estimation through twist-swing two-stage traversal initialization and local refinement on $SO(3)$. A subsequent Jacobian-row-equilibrated iteratively reweighted least-squares (IRLS) refinement, combined with quasi-Akaike information criterion (QAIC) model comparison, detects the LoS path and jointly refines the UE state and scattering points. We also analyze formulation-specific local-rank properties and their minimal-set implications under unknown path identity. Simulations show that the proposed method remains competitive with calibrated amplitude-dependent baselines and is more robust to path-loss-model mismatch.
\end{abstract}
\begin{IEEEkeywords}
Snapshot Radio SLAM, mmWave, Amplitude-Independent, Model Selection.
\end{IEEEkeywords}

\section{Introduction}\label{Introduction} 

\IEEEPARstart{T}{he} ongoing rollout of 5G and evolution toward 6G enable high-precision positioning and sensing for emerging applications \cite{9976205}, \cite{9330512}. These capabilities are driven by mmWave spectrum and MIMO architectures, which provide wide bandwidth and large antenna arrays for improved delay and angular resolution \cite{9957135}, \cite{10818978}. Resolvable multipath components are particularly useful because their delay and angular parameters carry both UE-state information and environmental landmark information, enabling radio SLAM, including single-BS bistatic radio SLAM in cellular mmWave settings \cite{amjad2023radio,10694114,10556695}, which jointly estimates the user equipment (UE) pose, clock bias, and environmental scattering points. 

In radio SLAM, the line-of-sight (LoS) path and single-bounce reflections are typically treated as geometric inliers because their delay/angle measurements are well captured by simple propagation models. Specifically, the LoS path directly relates the transmitter and receiver states, while a single-bounce path can be represented through one dominant scattering point, such as a wall reflection; both path types therefore provide useful constraints for localization and mapping. In contrast, multi-bounce propagation often deviates from the single-scatter model and is commonly treated as an outlier, since it can exhibit longer delays and more complex angle-delay relationships. Following \cite{10509551}, we refer to single-bounce paths as NLoS-1 and multi-bounce paths as NLoS-$n$ in this paper, as illustrated in Fig. \ref{SLAM_paradigm}.

\begin{figure}
    \centering
    \includegraphics[width=0.9\linewidth]{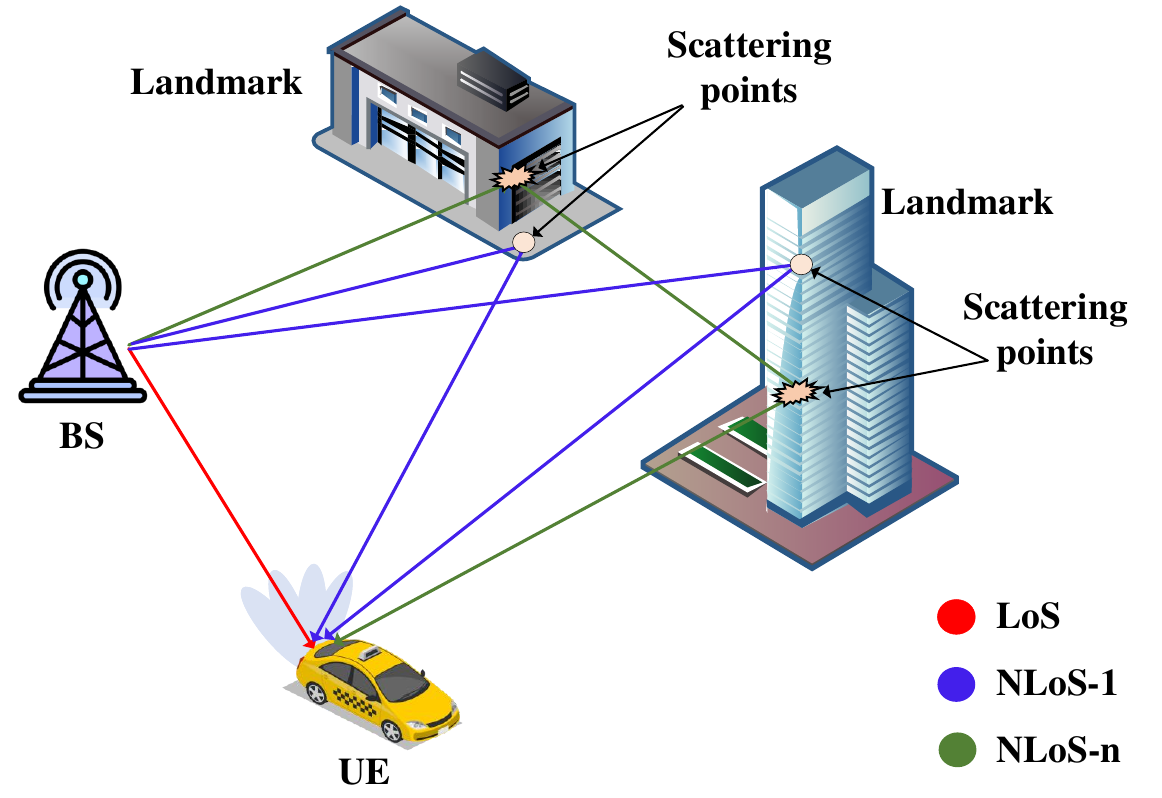}
    \caption{\small Illustration of the bistatic radio SLAM paradigm where UE jointly estimates its state as well as those of the environment landmarks.}
    \label{SLAM_paradigm}
\end{figure}

Within bistatic radio SLAM, snapshot radio SLAM is designed on a single channel snapshot without prior user states or motion models \cite{10694114}, \cite{9367007}; it is therefore useful as a standalone baseline and as an initializer for sequential schemes \cite{9724223,9179819,10568571}. Beyond snapshot settings, sequential multipath-based SLAM has also been developed to fuse information across multiple propagation paths and radio-reflective surfaces \cite{leitinger2023data}. An inherent challenge of snapshot radio SLAM is the ambiguity of path identity, since LoS, NLoS-1, and NLoS-$n$ paths have distinct geometric characteristics and require different modeling assumptions. Early snapshot radio SLAM formulations assume known path identity and inlier-only measurements \cite{9367007,9179819,10424688}. More recent robust snapshot methods, mainly developed or validated for planar/2-D settings, further handle outliers and unknown path identity through robust estimation or Bayesian hypothesis handling \cite{10818978,11046111}. However, these robust snapshot baselines often rely on path-amplitude or path-loss information for LoS handling, making their performance dependent on calibration quality and propagation-model accuracy. Moreover, although both LoS and NLoS-1 paths are valid inliers, confusing these two path types can bias the state estimate because the LoS path has no associated scattering point \cite{11366027}, \cite{HU20254457}.

To address these challenges, we develop an amplitude-independent robust snapshot radio SLAM method based on a unified angle-delay formulation. The proposed formulation uses angle-delay geometric consistency to handle LoS and NLoS-1 inliers, reject NLoS-$n$ outliers, and extend naturally to general 3-D/6-D pose estimation without introducing path-wise latent variables. Built on this formulation, the coarse stage performs robust initialization and path-consistency testing directly from the UE state and angle-delay observations, avoiding amplitude-based LoS preclassification. LoS detection is deferred to a subsequent model-selection refinement stage, thereby reducing the impact of LoS/NLoS-1 misclassification within the inlier set. 
Our main contributions are summarized as follows:
\begin{itemize}
    \item We propose a unified angle-delay coarse-stage formulation for LoS and NLoS-1 inlier paths. The formulation is written directly in the UE state and channel angle-delay observations, avoids path-wise latent variables in the coarse-stage solver state, and does not require amplitude-based LoS preclassification.
    \item We develop an amplitude-independent robust initialization pipeline under unknown path identity and NLoS-n outliers. The pipeline embeds the proposed coarse solver into a consensus-based minimal-set search with geometric feasibility checks, while formulation-specific local-rank analysis provides the corresponding minimal-set implications.
    \item We extend the formulation to a general 3-D setting for 6-D pose and clock-bias estimation, with the planar case treated as a special case of the same vector construction. For the 3-D case, the position and clock bias are updated in closed form for a given orientation, while the UE orientation is initialized by twist-swing two-stage traversal and then refined locally on $SO(3)$.
    \item Building on the coarse estimate, we develop a Jacobian-row-equilibrated IRLS refinement stage with QAIC-based model comparison that jointly refines the UE state, detects the LoS path, and estimates scattering points without relying on calibrated path-loss information.
\end{itemize}

The rest of the paper is organized as follows. Section \ref{Problem formulation and model} presents the problem formulation and system models. Section \ref{Proposed Amplitude-Independent Method} develops the proposed angle-delay formulation, robust initialization, and refinement scheme. Section \ref{Formulation-Specific Local-Rank and Minimal-Sample Implications} discusses local-rank and minimal-sample implications. Section \ref{Results} reports simulation results, and Section \ref{Conclusion} concludes the paper.
\begin{figure}[t]
  \centering
  \includegraphics[width=0.95\linewidth]{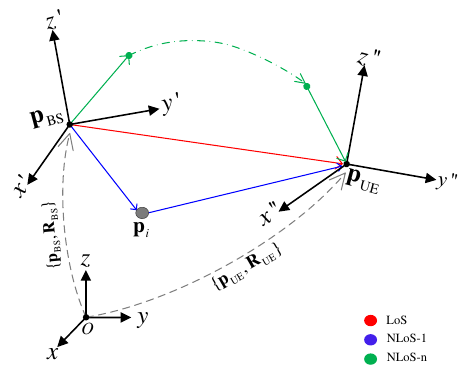} 
  \caption{\small A spatial schematic that depicts multiple signal propagation paths with one LoS path (red line), one NLoS-1 path (blue line), and one NLoS-n path (green line). The local frame of the BS $ (x',y',z')$ and that of UE $(x'',y'',z'')$, respectively, has relative transformations of $\{\mathbf{p}_{\rm{BS}},\mathbf{R}_{\rm{BS}}\}$ and $\{\mathbf{p}_{\rm{UE}},\mathbf{R}_{\rm{UE}}\}$ with respect to the global coordinate system $(x, y,z)$. The point ${{\bf{p}}_{i}}$ denotes the position of the scattering point of the NLoS-1 path. }
  \label{geometry_3d}
\end{figure}

\section{Problem formulation and models}\label{Problem formulation and model} 

We consider a bistatic downlink snapshot radio SLAM problem with stationary BS and UE. The BS state, including position $\mathbf{p}_{\rm{BS}}\in\mathbb{R}^3$ and orientation $\mathbf{R}_{\rm{BS}}\in SO(3)$, is assumed known, while the UE state is $\mathbf{x}=\{\mathbf{p}_{\rm{UE}},b_{\rm{UE}},\mathbf{R}_{\rm{UE}}\}$, where position $\mathbf{p}_{\rm{UE}}\in\mathbb{R}^3$, clock bias $b_{\rm{UE}}\in\mathbb{R}$, and orientation $\mathbf{R}_{\rm{UE}}\in SO(3)$. As shown in Fig.~\ref{geometry_3d}, the propagation paths are classified as LoS, single-scattering NLoS (NLoS-1), and multi-scattering NLoS (NLoS-n). An NLoS-1 path is associated with a scattering point $\mathbf{p}_i\in\mathbb{R}^3$.

Radio SLAM is typically performed in two phases: multipath channel-parameter estimation followed by geometric localization and mapping \cite{10818978}, \cite{8240645}. This work focuses on the second phase and treats the estimated path parameters as observations.

\subsection {Geometric Measurement Models}\label{Geometric Measurement Models}
We assume that antenna arrays are available at both the BS and UE, and that a standard channel-estimation method \cite{9976205}, \cite{9724223}, \cite{9620874} provides the AoD $\bm{\phi}_i=[{\phi _{a,i}},{\phi _{e,i}}]^ \top$, AoA $\bm{\theta }_i = [{\theta _{a,i}},{\theta _{e,i}}]^ \top$, delay ${\tau _i}$, Doppler frequency shift ${\nu _i}$, and complex gain ${\xi _i}$ of the $i$th path \cite{10568571}, \cite{9591345}. Since UE mobility is not considered, we set $\nu_i=0$ or absorb it into the complex gain.

The geometric measurement model uses these channel estimates as observations. Let $c$ denote the speed of light, let ${{\bf{z}}_i} = [c{\tilde \tau _i},\boldsymbol{\tilde \phi _{i}},\boldsymbol{\tilde \theta _{i}}]$ denote the measurement of the $i$th path, and let $\mathcal{Z}$ and $\mathcal{I}$ denote the measurement set and associated indices. Assuming zero-mean Gaussian measurement noise \cite{10818978}, the likelihood is
\begin{equation}\label{observe pdf}
p({{\bf{z}}_i}\mid {\bf{\Theta}}_i) = {\cal N}({{\bf{z}}_i}\mid {\bf{h}}_i({\bf{\Theta}}_i),{{\bf{\Sigma}}_i}),
\end{equation}
where ${{\bf{h}}_i} = [c{\tau}_{i}, \boldsymbol{\phi}_{i}, \boldsymbol{\theta}_{i}]$ is the noiseless channel-parameter vector parameterized by ${\bf{\Theta}}_i$, and ${{\bf{\Sigma}}_i}$ is the measurement covariance. Here $\mathbf{\Sigma}_i$ describes the data-generating uncertainty; the proposed estimator does not assume calibrated path-wise covariance and therefore uses the covariance-free Jacobian-row-equilibrated IRLS refinement in Section~\ref{Iterative Estimation and LoS Detection}. The mean model differs for LoS, NLoS-1, and NLoS-n paths.

\textbf{LoS Path:} The LoS mean model \cite{9500778} is 
\begin{equation}\label{observe mean LoS}
\left\{ {\begin{array}{*{20}{l}}
\begin{array}{l}
c{\tau _{{\rm{LoS}}}} = \left\| {{{\bf{p}}_{{\rm{BS}}}} - {{\bf{p}}_{{\rm{UE}}}}} \right\| + c{b_{{\rm{UE}}}}\\
{\phi _{{\rm{a}},{\rm{LoS}}}} = {\tan ^{ - 1}}\left( {\frac{{{\bf{e}}_2^ \top {{\bf{R}}^ \top_{{\rm{BS}}}}\left( {{{\bf{p}}_{{\rm{UE}}}} - {{\bf{p}}_{{\rm{BS}}}}} \right)}}{{{\bf{e}}_1^ \top {{\bf{R}}^ \top_{{\rm{BS}}}}\left( {{{\bf{p}}_{{\rm{UE}}}} - {{\bf{p}}_{{\rm{BS}}}}} \right)}}} \right)\\
{\phi _{{\rm{e}},{\rm{LoS}}}} = \sin ^{ - 1} \left( {\frac{{{\bf{e}}_3^ \top {{\bf{R}}^ \top_{{\rm{BS}}}}\left( {{{\bf{p}}_{{\rm{UE}}}} - {{\bf{p}}_{{\rm{BS}}}}} \right)}}{{\left\| {{{\bf{p}}_{{\rm{BS}}}} - {{\bf{p}}_{{\rm{UE}}}}} \right\|}}} \right)
\end{array}\\
\begin{array}{l}
{\theta _{{\rm{a,LoS}}}} = {\tan ^{ - 1}}\left( {\frac{{{\bf{e}}_2^ \top {{\bf{R}}^ \top_{{\rm{UE}}}}\left( {{{\bf{p}}_{{\rm{BS}}}} - {{\bf{p}}_{{\rm{UE}}}}} \right)}}{{{\bf{e}}_1^ \top {{\bf{R}}^ \top_{{\rm{UE}}}}\left( {{{\bf{p}}_{{\rm{BS}}}} - {{\bf{p}}_{{\rm{UE}}}}} \right)}}} \right)\\
{\theta _{{\rm{e,LoS}}}} = \sin ^{ - 1} \left( {\frac{{{\bf{e}}_3^ \top {{\bf{R}}^ \top_{{\rm{UE}}}}\left( {{{\bf{p}}_{{\rm{BS}}}} - {{\bf{p}}_{{\rm{UE}}}}} \right)}}{{\left\| {{{\bf{p}}_{{\rm{BS}}}} - {{\bf{p}}_{{\rm{UE}}}}} \right\|}}} \right)
\end{array}
\end{array}} \right.,
\end{equation}
where $\mathbf{e}_k$ denotes the $k$th canonical basis vector. \eqref{observe mean LoS} implies that ${\bf{\Theta}}_{i} \doteq \{{\mathbf{p}}_{\rm{UE}},b_{\rm{UE}},\mathbf{R}_{\rm{UE}}\}\in \mathbb{R}^{4}\times SO(3)$.

\textbf{NLoS-1 Path:} The NLoS-1 mean model \cite{9500778} is 
\begin{equation}\label{observe mean}
\left\{ {\begin{array}{*{20}{l}}
\begin{array}{l}
{c{\tau _i} = \left( {\left\| {{{\bf{p}}_{{\rm{BS}}}} - {{\bf{p}}_i}} \right\| + \left\| {{{\bf{p}}_i} - {{\bf{p}}_{{\rm{UE}}}}} \right\|} \right) + c{b_{{\rm{UE}}}}}\\
{\phi _{a,i}} = {\tan ^{ - 1}}\left( {\frac{{{\bf{e}}_2^ \top {{\bf{R}}^ \top_{{\rm{BS}}}}\left( {{{\bf{p}}_i} - {{\bf{p}}_{{\rm{BS}}}}} \right)}}{{{\bf{e}}_1^ \top {{\bf{R}}^ \top_{{\rm{BS}}}}\left( {{{\bf{p}}_i} - {{\bf{p}}_{{\rm{BS}}}}} \right)}}} \right)\\
{\phi _{{\rm{e,i}}}} = \sin ^{ - 1} \left( {\frac{{{\bf{e}}_3^ \top {{\bf{R}}^ \top_{{\rm{BS}}}}\left( {{{\bf{p}}_i} - {{\bf{p}}_{{\rm{BS}}}}} \right)}}{{\left\| {{{\bf{p}}_{{\rm{BS}}}} - {{\bf{p}}_i}} \right\|}}} \right)\\
{\theta _{a,i}} = {\tan ^{ - 1}}\left( {\frac{{{\bf{e}}_2^\top {{\bf{R}}^ \top_{{\rm{UE}}}} \left( {{{\bf{p}}_i} - {{\bf{p}}_{{\rm{UE}}}}} \right)}}{{{\bf{e}}_1^\top {{\bf{R}}^ \top_{{\rm{UE}}}}\left( {{{\bf{p}}_i} - {{\bf{p}}_{{\rm{UE}}}}} \right)}}} \right)\\
{{\theta _{{\rm{e,i}}}} = \sin ^{ - 1} \left( {\frac{{{\bf{e}}_3^\top {{\bf{R}}^ \top_{{\rm{UE}}}} \left( {{{\bf{p}}_i} - {{\bf{p}}_{{\rm{UE}}}}} \right)}}{{\left\| {{{\bf{p}}_i} - {{\bf{p}}_{{\rm{UE}}}}} \right\|}}} \right)}
\end{array}
\end{array}} \right.,
\end{equation}
Thus, ${\bf{\Theta}}_{i} \doteq \{{\mathbf{p}}_{\rm{UE}},b_{\rm{UE}},{\mathbf{p}}_{i},\mathbf{R}_{\rm{UE}}\}\in \mathbb{R}^{7}\times SO(3)$ for an NLoS-1 path.

\textbf{NLoS-n Path:} Since NLoS-n paths do not admit a simple single-scatter geometric model, they are described directly by ${{\bf{\Theta }}_i} \buildrel\textstyle.\over= \{ c{\tau _i},\bm{\phi} _{i},\bm{\theta} _{i}\}\in\mathbb{R}^{5} $.

This paper focuses on estimating the UE state and NLoS-1 scattering points while identifying LoS/NLoS-1 inliers and rejecting NLoS-n outliers from the measurement set $\mathcal{Z}$. The unknown path identity motivates the robust formulation developed next.
\section{Proposed Amplitude-Independent Method }\label{Proposed Amplitude-Independent Method}

This section develops the proposed unified angle-delay formulation for LoS and NLoS-1 inlier paths. We first present a planar special case embedded in the same 3-D vector construction, then extend it to general 3-D/6-D pose estimation, unknown path identity, outlier rejection, and Jacobian-row-equilibrated IRLS refinement with QAIC-based model comparison.

\subsection{Amplitude-Independent Method in the absence of Outliers}
\label{Amplitude-Independent Optimization Method Without Outliers}
We begin with the planar special case because it exposes the geometric structure with lighter notation. The purpose is to derive an angle-delay expression of the UE state without introducing the unknown incidence/scattering point locations $\mathbf p_i$ as variables. The resulting equations provide a novel amplitude-independent coarse-stage formulation, written directly in the UE state and angle-delay observations, and lead naturally to the general 3-D case.

\subsubsection{\textbf{Planar Scenario}} 
In this scenario, all scatterers lie in a common horizontal plane, and the elevation components of AoA and AoD are disregarded. As shown in Fig. \ref{geometry_2dto3d_framework}, the planar construction is illustrated as a top-view schematic viewed along the global $z$ axis, with the global $X$-$Y$ coordinates indicated in the figure. The planar case is treated as a special case of the general 3-D vector construction. The UE and BS orientations are described by the yaw angles ${\alpha_{\rm{UE}}}$ and ${\alpha_{\rm{BS}}}$, with $\mathbf{R}_{\rm{UE}} = \rm{blkdiag}\left(\tilde{\mathbf{R}}(\alpha_{\rm{UE}}),1\right)$ and $\mathbf{R}_{\rm{BS}} = \rm{blkdiag}\left(\tilde{\mathbf{R}}(\alpha_{\rm{BS}}),1\right)$, where
\begin{equation}\label{eq:planar_yaw_rotation}
\tilde{\mathbf{R}}(\alpha ) = \begin{bmatrix}
\cos (\alpha ) & - \sin (\alpha ) \\
\sin (\alpha ) & \cos (\alpha ) 
\end{bmatrix} \in SO(2),
\end{equation}
and the AoA $\theta$ and AoD $\phi$ are expressed in the local frames of the UE and BS, respectively.

\begin{figure}[t]
  \centering
  \includegraphics[width=0.85\linewidth]{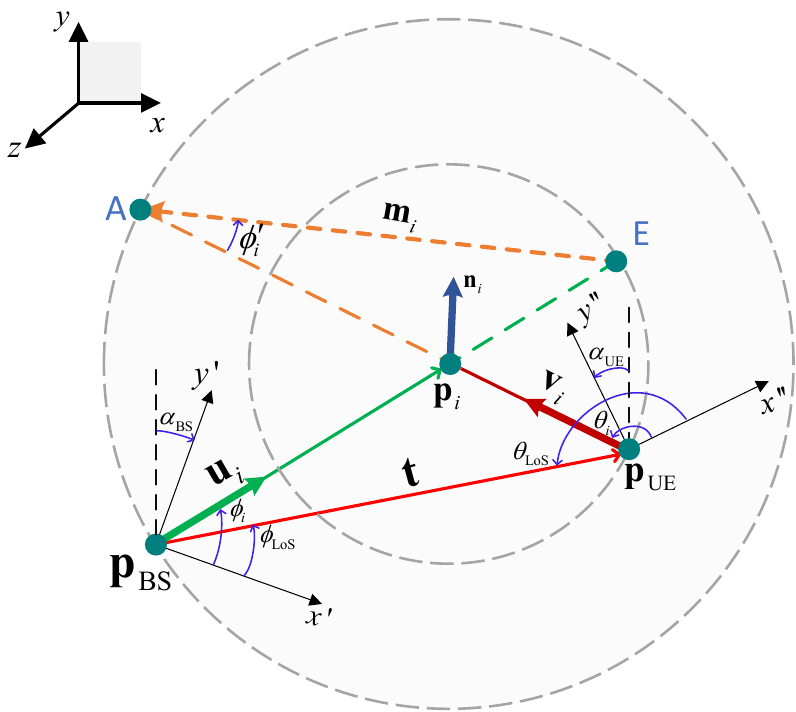} 
  \caption{\small Top-view planar schematic of the angle-delay construction, viewed along the global $z$ axis and projected onto the $X-Y$ plane. The global $X$-$Y$ coordinates are indicated in the figure. For illustration, one LoS path and one NLoS-1 path are shown. Auxiliary circles are introduced so that the angle-delay consistency of a valid NLoS-1 path can be expressed directly in the UE state. The derivation itself does not assume that the LoS identity is known a priori. The scattering point ${{\bf{p}}_{{i}}}$ shares the center of the gray dashed circles that intersect $\mathbf{p}_{\rm{BS}}$ and $\mathbf{p}_{\rm{UE}}$, respectively.}
  \label{geometry_2dto3d_framework}
\end{figure}

In Fig. \ref{geometry_2dto3d_framework}, two auxiliary circles centered at ${{\bf{p}}_{i}}$ are used to express the NLoS-1 angle-delay consistency directly in the UE state. Defined in the global coordinate system, the AoD and AoA direction vectors of the $i$th path are
\begin{equation}\label{doa unit}
{{\bf{u}}_i} = {\bf{R}}_{\rm{BS}}{\left[ 
{\cos ({\phi _i})}, {\sin ({\phi _i})}, 0\right]^ \top },
\end{equation}
 \begin{equation}\label{aoa unit}
{{\bf{v}}_i} = {\bf{R}}_{\rm{UE}}{\left[ {\cos ({\theta _i})}, {\sin ({\theta _i})},0 \right]^ \top }.
\end{equation}

Let ${{\bf{t}}}= {{\bf{p}}_{\rm{UE}}} - {{\bf{p}}_{\rm{BS}}}$ denote the BS-to-UE displacement, and introduce ${{\bf{m}}_i}$ as an auxiliary vector. From the symmetry of the auxiliary construction about the scattering point $\mathbf{p}_i$, the corresponding vector relations satisfy
\begin{equation}\label{solving 1}
\left\{
\begin{array}{c}
{\bf{t}} \times {{\bf{v}}_i} = {{\bf{u}}_i} \times {{\bf{m}}_i}  \\
\mathbf{t}^{\top}\mathbf{v}_i = \mathbf{u}_i^{\top}\mathbf{m}_i.
\end{array}
\right.
\end{equation}
Here, $\mathsf{AE}$ denotes the auxiliary segment from $\mathsf{E}$ to $\mathsf{A}$, with $\mathbf{m}_i=\overrightarrow{\mathsf{E}\mathsf{A}}$. By the symmetry of the auxiliary construction, $\mathsf{AE}$ is congruent to the BS-to-UE segment, and hence $\|\mathbf{m}_i\|_2=\|\mathbf{t}\|_2$. Equivalently,
\begin{equation}\label{solving 2}
{{\bf{m}}_i} = \underbrace{({\bf{t}} + {d_i}{{\bf{v}}_i})}_{\rm{Point}\; \mathsf{A}} - \underbrace{{d_i}{{\bf{u}}_i}}_{\rm{Point}\; \mathsf{E}},
\end{equation}
and $d_i = c(\tau_i - b_{\rm{UE}})$ is the propagation distance of the $i$th path after clock-bias correction. The auxiliary segment {\sf{AE}} allows the path constraint to be written in terms of the UE state and angle-delay observations, without introducing an extra path-wise variable such as $\mathbf{p}_i$ or the fractional parameter in \cite{9179819}. The same relation covers both LoS and NLoS-1 inlier paths, with LoS as a degenerate case.

Substituting \eqref{solving 2} into \eqref{solving 1} and using the vector triple-product identity yields the compact constraint
\begin{equation}\label{solving 3}
\mathbf{M}_i\mathbf{t}=d_i(\mathbf{v}_i-\mathbf{u}_i),
\end{equation}
where
\begin{equation}\label{M}
    \mathbf{M}_i = \mathbf{v}_i \mathbf{u}_i^\top+\mathbf{u}_i \mathbf{v}_i^\top-(\mathbf{u}_i^\top \mathbf{v}_i+1)\mathbf{I}.
\end{equation}
This constraint is equivalent to \eqref{solving 1}--\eqref{solving 2}. Matrix $\mathbf M_i$ maps $\mathbf t$ onto the path-induced constraint subspace determined by $(\mathbf u_i,\mathbf v_i)$; for an NLoS-1 path, its nullspace is along $\mathbf n_i=\mathbf u_i+\mathbf v_i$, while LoS is a more degenerate case.

For each path, $d_i$ depends on the clock bias and $\mathbf{v}_i$ depends on the UE orientation. Thus, \eqref{solving 3} defines the residual
\begin{equation}\label{solving 4}
\mathbf{r}_i(\mathbf{x})\doteq 
\mathbf{M}_i\mathbf{t}-
{d_i(\mathbf{v}_i-\mathbf{u}_i)}.
\end{equation}
where ${\bf{x}} = {[{\bf{ p}}_{{\rm{UE}}}^ \top ,{ b_{{\rm{UE}}}},{ \alpha _{{\rm{UE}}}}]^ \top }$, which is free of the scattering point positions $\mathbf{p}_i$. With noisy observations, we estimate $\mathbf{x}$ by LS over the inlier set $\mathcal I$:
\begin{equation}\label{cost 1}
J({\bf{x}}) = \textstyle \sum_{i \in {\mathcal I}} J_i({\bf{x}}),\quad J_i({\bf{x}}) =  \| \mathbf{r}_i({\bf{x}})\|^2,
\end{equation}
and
\begin{equation}\label{minLS}
\hat{\mathbf{x}} = {\arg\min _{{\mathbf{x}}}} J({\bf{x}}).
\end{equation}
For a fixed heading ${\alpha _{{\rm{UE}}}}$, the position and clock bias admit a closed-form update. We therefore search over $\alpha_{\rm{UE}}\in\mathcal A$ and choose the heading that minimizes \eqref{minLS}, while all candidate inlier paths are evaluated by the same angle-delay constraint without LoS preclassification.

\textbf{Position and Clock Bias Estimation:}\label{Position And clock bias Estimation}
For a given ${\alpha _{{\rm{UE}}}}$,  \eqref{cost 1} can be written as 
\begin{equation} \label{cost 2}
J({\bf{x}}) = 
\Bigg\| 
\mathbf{b} - \mathbf{A}
\begin{bmatrix}
\mathbf{p}_{\rm{UE}} - \mathbf{p}_{\rm{BS}}\\
-cb_{\rm{UE}}
\end{bmatrix}
\Bigg\|^2,
\end{equation}
with
\begin{equation}\label{AA}
{\bf{A}} = \left[
    \cdots,
    {\mathbf{A}}_i^{\top},
    \cdots,
    \left[0,0,1,0\right]^{\top}
\right]^{\top} \in {\mathbb{R}}^{(3|\mathcal{I}|+1)\times 4},
\end{equation}
and
\begin{equation}\label{bb}
{\bf{b}} = [\cdots,{{\mathbf{b}}_{i}^ \top},\cdots,0]^ \top  \in {\mathbb{R}}^{(3|\mathcal{I}|+1)\times 1},
\end{equation}
for $i\in\mathcal{I}$. In \eqref{AA} the submatrix 
\begin{equation}
{{\bf{A}}_i} =\left[\mathbf{M}_i\quad-\left(\mathbf{v}_i-\mathbf{u}_i\right)\right]\in\mathbb{R}^{3\times 4}
\end{equation}
while in \eqref{bb} the subvector ${{\bf{b}}_i} = c{\tau_i}({{\bf{v}}_i} - {{\bf{u}}_i})$.  
Since we discuss the planar scenario and solve the problem in a 3-D view, we set the last row of $\bf{A}$ and $\bf{b}$ to be $\left[0,0,1,0\right]$ and 0, respectively, to constrain the third dimension of the estimate to be 0, ensuring that the final solution is confined to the horizontal plane. 
Then, a closed-form solution can be obtained as
\begin{equation}\label{solving}
\left[({\hat {\bf{p}}_{{\rm{UE}}}} - {{\bf{p}}_{{\rm{BS}}}})^ \top, -c\hat{b}_{\rm{UE}}\right]^ \top = \mathbf{A}^{\dagger}{\mathbf{b}}.
\end{equation} 
where $(\cdot)^{\dagger}$ computes the pseudoinverse.

\textbf{UE Orientation Estimation:}\label{UE Orientation Estimation
}
For each candidate $\alpha\in\mathcal A$, let $\hat{\mathbf{x}}_{\alpha}\doteq[\hat{\mathbf p}_{\rm UE}^{\top}(\alpha),\hat b_{\rm UE}(\alpha),\alpha]^{\top}$ denote the state obtained by the closed-form position and clock-bias update. The optimal $\alpha_{\rm{UE}}$ is then obtained by 
\begin{equation}\label{aue}
\hat{\alpha}_{\rm UE} = \arg\min_{\alpha\in\mathcal A} J(\hat{\mathbf{x}}_{\alpha}).
\end{equation}
Here $J(\hat{\mathbf{x}}_{\alpha})$ denotes $J(\mathbf{x})$ evaluated after substituting the closed-form position and clock-bias estimate associated with $\alpha$.

\begin{figure}
  \centering
  \includegraphics[width=0.98\linewidth]{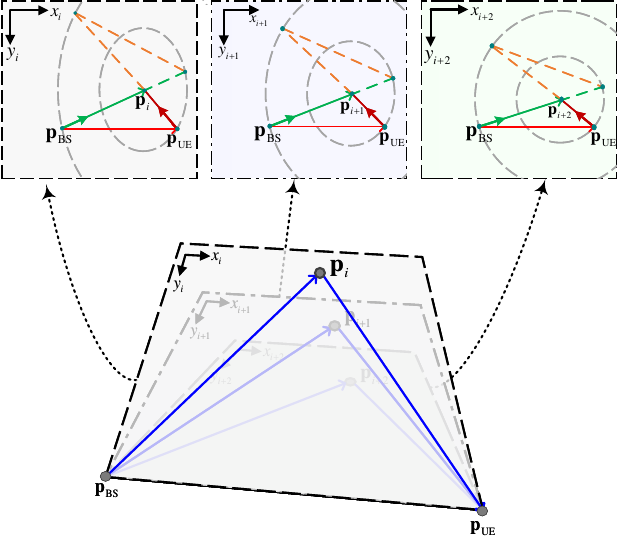} 
  \caption{\small The complete 3-D geometric framework of the considered propagation scenario embedded in the $X-Y-Z$ space, represented by a pencil of planes. Each propagation path defines a local plane spanned by the BS, UE and scatterer $\mathbf{p}
  _i$, and is illustrated using its own local coordinate frame $({x_i},{y_i})$.}
  \label{geometry_3d_framework}
\end{figure}

\subsubsection{\textbf{General Scenario}}
The planar derivation extends directly to 3-D because the BS, UE, and each NLoS-1 scatterer define a local propagation plane. As illustrated by the pencil-of-planes representation in Fig. \ref{geometry_3d_framework} \cite{hartley2003multiple}, the same auxiliary construction yields the global AoD and AoA direction vectors
 \begin{equation}\label{doa unit 3-D}
{{\bf{u}}_i} = {\bf{R}}_{{\rm{BS}}}\left[ \begin{array}{c}
\cos ({\phi _{e,i}})\cos ({\phi _{a,i}})\\
\cos ({\phi _{e,i}})\sin ({\phi _{a,i}})\\
\sin ({\phi _{e,i}})
\end{array} \right],
\end{equation}

\begin{equation}\label{aoa unit 3-D}
{{\bf{v}}_i} = {\bf{R}}_{{\rm{UE}}}\left[ \begin{array}{c}
\cos ({\theta _{e,i}})\cos ({\theta _{a,i}})\\
\cos ({\theta _{e,i}})\sin ({\theta _{a,i}})\\
\sin ({\theta _{e,i}})
\end{array} \right].
\end{equation}

By applying the same geometric relationships in \eqref{solving 2} and \eqref{solving 4} and by adopting the LS criterion in the presence of noise, we have the 3-D counterpart of \eqref{minLS}, that is,
\begin{equation}\label{minLS3d}
\hat{\mathbf{x}} = {\arg\min _{{\mathbf{x}}}}  \textstyle\sum_{i \in \mathcal{I}} {{{\left\| {{{\bf{r}}_i}({\bf{x}})} \right\|}^2}}  \quad  \text{s.t.} \quad  \mathbf{R}_{\mathrm{UE}} \in SO(3),
\end{equation}
where ${\bf{x}} = \{ {{\bf{p}}_{{\rm{UE}}}},{b_{{\rm{UE}}}},{{\bf{R}}_{{\rm{UE}}}}\} $, and the constraint $\mathbf{R}_{\mathrm{UE}}\in SO(3)$ in \eqref{minLS3d} implies $\mathbf{R}_{\mathrm{UE}}^{\mathrm{T}}\mathbf{R}_{\mathrm{UE}} = \mathbf{I}, \det\left(\mathbf{R}_{\mathrm{UE}}\right) = +1$.

As in the planar case, for a fixed $\mathbf R_{\rm UE}$, the position and clock bias admit a closed-form update. Hence, the 3-D problem is handled by profiling out $(\mathbf p_{\rm UE},b_{\rm UE})$ for each orientation candidate and then refining the orientation on $SO(3)$. The orientation initialization and the subsequent local manifold refinement are described below.

\textbf{3-D Position and Clock Bias Estimation:}\label{Position And clock bias Estimation 3-D}
As with the planar scenario, for a given ${\bf{R}_{{\rm{UE}}}}$, a closed-form solution can be obtained for ${{\bf{x}}}_{{\mathbf{R}}_{\rm{UE}}} \doteq {[{\bf{p}}_{{\rm{UE}}}^ \top,{b_{{\rm{UE}}}}]^ \top }$, which is of the same form of \eqref{solving} but has a slightly different $\mathbf{A}$ and $\mathbf{b}$, i.e.,
\begin{equation}\label{AA 3-D}
{\bf{A}} = \left[\cdots, {\mathbf{A}}_i^{\top}, \cdots\right]^{\top} \in {\mathbb{R}}^{(3|\mathcal{I}|)\times 4},\quad i\in\mathcal{I},
\end{equation}
and
\begin{equation}\label{bb 3-D}
{\bf{b}} = \left[\cdots,{{\mathbf{b}}_{i}^ \top},\cdots\right]^ \top  \in {\mathbb{R}}^{(3|\mathcal{I}|)\times 1},\quad i\in\mathcal{I}.
\end{equation}
The LS update has the same normal-equation form as \eqref{solving}, with the planar constraint row removed.
For any fixed orientation $\mathbf R$, let $\hat{\mathbf{x}}_{\mathbf R}\doteq\{\hat{\mathbf p}_{\rm UE}(\mathbf R),\hat b_{\rm UE}(\mathbf R),\mathbf R\}$ denote the full candidate state obtained after this closed-form position and clock-bias update.

\textbf{3-D UE Orientation Estimation:}\label{UE Orientation Estimation 3-D
}
This step estimates the UE orientation matrix by iteratively updating the orientation matrix ${\bf{R }}_{{\rm{UE}}}$. Specifically, at iteration $k$, the orientation matrix is updated as\footnote{Here $\rm{Exp}(\cdot)$ denotes the matrix exponential. For any $\mathbf a\in\mathbb R^3$, $[\mathbf a]_{\times}$ denotes the skew-symmetric matrix satisfying $[\mathbf a]_{\times}\mathbf b=\mathbf a\times\mathbf b$ for any $\mathbf b\in\mathbb R^3$.}
\begin{equation}\label{aue 3-DD}
    {\bf{R}}^{(k + 1)}_{{\rm{UE}}} = \rm{Exp}\left(\left[{\bm{\delta}_{{\rm{UE}}}}\right]_{\times}\right){\bf{R}}^{(k)}_{\rm{UE}} .
\end{equation} 
where ${\bm{\delta}_{{\rm{UE}}}} \in \mathbb{R}^3$ is the left-perturbation increment. The increment vector is obtained as
\begin{equation}\label{delta_UE}
    {{\boldsymbol{\delta }}_{{\rm{UE}}}} =  - \left(\nabla_{\mathbf{R}_{\rm{UE}}}{\mathbf{r}}\right)^{\dagger} \mathbf{r}.
\end{equation}
where $\nabla_{\mathbf{R}_{\rm{UE}}}{\bf{r}}=\left[\cdots,(\nabla_{\mathbf{R}_{\rm{UE}}}{\bf{r}}_i)^\top,\cdots\right]^\top$ for $i\in\mathcal{I}$. Following the left perturbation \cite{barfoot2024state}, the Jacobian matrix of the residual $\mathbf{r}_i(\bf{x})$ with respect to the UE orientation is
\begin{equation}
    {\nabla _{{{\bf{R}}_{{\rm{UE}}}}}} {{\bf{r}}_i} =  - (({\bf{u}}_i^ \top {\bf{t}} - {d_i}){\bf{I}} + {{\bf{u}}_i}{{\bf{t}}^ \top } - {\bf{tu}}_i^ \top ){\left[ {{{\bf{v}}_i}} \right]_ \times } \in \mathbb{R}^{3\times 3}.
\end{equation}
where ${{\bf{v}}_i}$ is the global AoA unit vector in \eqref{aoa unit 3-D}.

\textbf{Orientation Initialization:}
The following procedure provides the initial orientation for the 3-D UE orientation estimation in \eqref{aue 3-DD}. To avoid exhaustive traversal over $SO(3)$ while exploiting the fact that the receiver is usually approximately upright, we use a twist-swing two-stage traversal. Here, the twist denotes an in-plane yaw rotation, while the swing denotes a bounded out-of-plane tilt around a horizontal axis. Using the planar yaw block in \eqref{eq:planar_yaw_rotation}, the twist rotation and the swing rotation are defined as
\begin{equation}\label{eq:twist_swing_rotation_init}
\mathbf{R}_{\rm tw}(\psi)=\rm{blkdiag}\left(\tilde{\mathbf{R}}(\psi),1\right),\,
\mathbf{R}_{\rm sw}(\chi,\beta)=\operatorname{Exp}\left(\beta[\mathbf a(\chi)]_{\times}\right),
\end{equation}
where $\psi$ is the twist angle, $\mathbf a(\chi)=\left[-\sin\chi,\cos\chi,0\right]^\top$ specifies the horizontal swing axis, and $\beta$ is the swing magnitude. 

An upright UE implies that the local vertical axis of the receiver is approximately aligned with the global vertical axis, i.e., $\mathbf R_{\rm UE}\mathbf e_3\approx\mathbf e_3$ with $\mathbf e_3=[0,0,1]^\top$. Thus, the dominant attitude uncertainty is the twist angle around the vertical axis, while the swing angle $\beta$ only accounts for a bounded tilt away from the upright direction.
Compared with an Euler-angle parameterization, this decomposition avoids rotation-order ambiguity and makes the upright prior explicit, because $\beta$ directly bounds the tilt of the receiver vertical axis instead of indirectly constraining coupled pitch and roll angles.
We define the twist grid $\mathcal A_\psi=\{-\pi,-\pi+\Delta_\psi,\ldots,\pi\}$, the swing-axis grid $\mathcal A_\chi=\{0,\Delta_\chi,\ldots,2\pi\}$, and a bounded swing-magnitude grid $\mathcal A_\beta=\{0,\Delta_\beta,\ldots,\beta_{\max}\}$.

In the first stage, the twist angle is selected by
\begin{equation}
\hat{\psi}=\arg\min_{\psi\in\mathcal A_\psi}
J\!\left(\hat{\mathbf{x}}_{\mathbf R_{\rm tw}(\psi)}\right),
\label{eq:twist_stage_init}
\end{equation}
where $\mathbf p_{\rm UE}$ and $b_{\rm UE}$ are updated in closed form for each twist candidate using \eqref{AA 3-D}--\eqref{bb 3-D}. In the second stage, a bounded swing traversal is performed around the selected twist seed:
\begin{equation}
(\hat\chi,\hat\beta)=\arg\min_{\chi\in\mathcal A_\chi, \beta\in\mathcal A_\beta}
J\!\left(\hat{\mathbf{x}}_{\mathbf R_{\rm sw}(\chi,\beta)\mathbf R_{\rm tw}(\hat\psi)}\right).
\label{eq:twist_swing_init}
\end{equation}
Starting from $\mathbf R_{\rm UE}^{(0)}=\mathbf R_{\rm sw}(\hat\chi,\hat\beta)\mathbf R_{\rm tw}(\hat\psi)$, the subsequent Gauss-Newton refinement updates the full rotation on $SO(3)$.

\textbf{Local Convergence:} The first block is a strictly convex least-squares problem, thus, its update does not increase the objective. For the second block, in a neighborhood of the solution, the convergence rates of the Gauss-Newton method on $SO(3)$ is superlinear \cite{absil2009optimization}.

\subsubsection{\textbf{Feasibility Check}}\label{Feasibility Check}
During the coarse search, feasibility checks reject geometrically inconsistent candidates without requiring prior LoS knowledge. The final LoS decision is deferred to the model-selection refinement in Section \ref{Iterative Estimation and LoS Detection}. Each path is tested against two admissible inlier interpretations:
\begin{itemize}
    \item A collinearity-consistent LoS: when the norm of both  $\Delta_1= \hat{\bar{\mathbf{t}}} \times \hat{\mathbf{v}}_i$ and $\Delta_2 = \hat{\bar{\mathbf{t}}} \times {\mathbf{u}}_i$ are below threshold $\varepsilon_c$ in absolute value, where $\hat{\bar{\bf{t}}}$ is the unit vector of $\hat{\bf{t}}=({\hat {\bf{p}}_{{\rm{UE}}}} - {{\bf{p}}_{{\rm{BS}}}})$, the transmit-receive pair is confirmed to be collinear with $\hat{\bf{t}}$.
    \item A single-bounce-consistent NLoS-1: when $\tfrac{\Delta_1^{\top} \Delta_2}{\|\Delta_1\|\| \Delta_2\|}$ is above a positive threshold $\varepsilon_p$, the vectors $\mathbf{u}_i$ and $\mathbf{v}_i$ reside on the same side of the vector $\hat{\bf{t}}$ within a common plane. Upon confirmation of this, if the inner products further satisfy $\hat{\bar{\mathbf{t}}}^{\top} \hat{\mathbf{v}}_i<\hat{\bar{\mathbf{t}}}^{\top} \mathbf{u}_i$, the path is considered satisfying the NLoS-1 path condition.
\end{itemize}
A candidate UE state is retained only if every selected inlier path satisfies at least one of these conditions.

\subsection{Extension to Scenarios in the presence of Outliers}\label{Extension to Amplitude-Independent Method with Outliers}
To handle NLoS-n outliers, we embed the coarse solver in a consensus-based exhaustive minimal-set search. The robust search is related to earlier snapshot SLAM ideas \cite{10818978}, but its hypothesis generation and feasibility handling are tailored to the proposed angle-delay formulation. Algorithm \ref{algorithm 1} uses an MSAC-style truncated cost \cite{TORR2000138}; the minimum-cost subset defines the coarse inlier set and the initial UE-state estimate for refinement.

\begin{algorithm}[t]
\caption{Robust coarse solver}
\label{alg:robust_coarse_init}
\begin{algorithmic}[1]
\Require Observation set $\mathcal Z$, index set $\mathcal I$, minimal subset size $N_{\rm min}$, threshold $T_{\varepsilon}$, dimension flag $D\in\{2\mathrm{D},3\mathrm{D}\}$, orientation grids $\mathcal A$, $\mathcal A_\psi$, $\mathcal A_\chi$, and $\mathcal A_\beta$
\Ensure Coarse inlier set $\widehat{\mathcal I}$ and initial UE-state estimate $\hat{\mathbf x}_0$

\State Generate the set of all candidate minimal subsets

$\mathfrak J=\left\{\mathcal J\subseteq \mathcal I:\ |\mathcal J|=N_{\rm min}\right\}$

\State Initialize $C^\star \gets +\infty$, $\widehat{\mathcal I}\gets \emptyset$, $\hat{\mathbf x}_0\gets \varnothing$

\ForAll{$\mathcal J\in \mathfrak J$}
    \If{$D=2\mathrm{D}$}
        \State Traverse $\alpha_{\rm UE}\in\mathcal A$ and update $\mathbf p_{\rm UE},b_{\rm UE}$ using \eqref{solving}
        \State Recover the orientation using \eqref{aue}
    \Else
        \State Traverse the twist grid $\mathcal A_\psi$ and update $\mathbf p_{\rm UE},b_{\rm UE}$ using \eqref{AA 3-D}--\eqref{bb 3-D}
        \State Traverse the swing grids $\mathcal A_\chi$ and $\mathcal A_\beta$ using \eqref{eq:twist_swing_init}
        \State Refine the full orientation using \eqref{aue 3-DD}
    \EndIf
    \State \textbf{continue} if $\mathcal J$ fails the feasibility check
    \State Form the coarse UE-state hypothesis $\hat{\mathbf x}_{\mathcal J}$

    \State Initialize the consensus set $\mathcal I_{\mathcal J}\gets \emptyset$
    \ForAll{path $k\in \mathcal I$}
        \State Compute the residual $\mathbf{r}_k(\hat{\mathbf x}_{\mathcal J})$
        \If{$\|\mathbf{r}_k(\hat{\mathbf x}_{\mathcal J})\|<T_{\varepsilon}$}
            \State $\mathcal I_{\mathcal J}\gets \mathcal I_{\mathcal J}\cup\{k\}$
        \EndIf
    \EndFor
    
    \State Compute the truncated cost
        $C_{\mathcal J}=\sum_{k\in \mathcal I}\min\left(\|\mathbf{r}_k(\hat{\mathbf x}_{\mathcal J})\|^2,\,T_{\varepsilon}^2\right)$

    \If{$C_{\mathcal J}<C^\star$}
        \State $C^\star\gets C_{\mathcal J}$, $\widehat{\mathcal I}\gets \mathcal I_{\mathcal J}$, $\hat{\mathbf x}_0\gets \hat{\mathbf x}_{\mathcal J}$
    \EndIf
\EndFor

\State \Return $\widehat{\mathcal I},\hat{\mathbf x}_0$
\end{algorithmic}
\label{algorithm 1}
\end{algorithm}

\subsection{Iterative Estimation and LoS Detection}\label{Iterative Estimation and LoS Detection}
The coarse stage provides an inlier set and an initial UE-state estimate, while final estimation fits the physical measurement model to $\mathcal{Z}_{\rm{inliers}}$. Since LoS and NLoS-1 paths have different parameterizations, refinement and LoS identification are handled jointly by model selection \cite{8498082}.
With $M$ inliers, we consider $M+1$ candidate models: one all-NLoS-1 model and $M$ mixed models in which path $m$ is interpreted as LoS and the remaining paths as NLoS-1. For each candidate model, $\mathbf{h}_k^{(m)}$ represents the chosen parameterization of the $k$th observation, and the parameter set is
\begin{equation}
{\bf{\Theta}}^{(m)} = 
\begin{cases} 
\{\mathbf{x}\}\cup\{{\bf{p}}_{k}|k\in\mathcal{I}_{\rm{inliers}}\backslash m\}, & \rm{for} \; \rm{LoS} \; \rm{case}\\
\{\mathbf{x}\}\cup\{{\bf{p}}_{k}|k\in\mathcal{I}_{\rm{inliers}}\},& \rm{otherwise}
\end{cases}.
\end{equation}
For model $m$, define the stacked residual and measurement Jacobian as $\mathbf e^{(m)}(\boldsymbol{\Theta}^{(m)})
=\mathbf z-\mathbf h^{(m)}(\boldsymbol{\Theta}^{(m)})$ and $\mathbf J^{(m)}(\boldsymbol{\Theta}^{(m)})
=\nabla_{\boldsymbol{\Theta}^{(m)}}\mathbf h^{(m)}(\boldsymbol{\Theta}^{(m)})$.
Because calibrated path-wise covariance matrices are not used as algorithmic inputs, the refinement is implemented as an iteratively reweighted least-squares (IRLS) procedure based on Jacobian-row equilibration \cite{Gambarini2026,Dixon2011}, rather than as Gauss-Newton minimization of a fixed covariance-weighted likelihood. At iteration $\ell$, the residual and Jacobian are evaluated at the current linearization point,
\begin{equation}
\mathbf e_{\ell}^{(m)}
=\mathbf e^{(m)}(\boldsymbol{\Theta}_{\ell}^{(m)}),\quad
\mathbf J_{\ell}^{(m)}
=\mathbf J^{(m)}(\boldsymbol{\Theta}_{\ell}^{(m)}).
\end{equation}
To mitigate local sensitivity imbalance among residual components while avoiding excessive amplification of weakly sensitive or nearly singular residual equations, the Ruiz row-equilibration weight \cite{ruiz2001scaling} of the $j$th residual component is computed as $\mathbf W_{\ell}^{(m)}
=\operatorname{diag}\!\left([\ldots,\|\mathbf J_{k,\ell}^{(m)}\|^{-1},\ldots]\right)$, where $\mathbf J_{k,\ell}^{(m)}$ is the $k$th row of $\mathbf J_{\ell}^{(m)}$. The weights are recomputed at the beginning of each IRLS iteration and then frozen within the local least-squares step. The local increment is obtained from
$\bm{\delta}_{\ell}^{(m)}
=
\left((\mathbf W_{\ell}^{(m)})^{1/2}\mathbf J_{\ell}^{(m)}\right)^{\dagger}
\left(\mathbf W_{\ell}^{(m)})^{1/2}\mathbf e_{\ell}^{(m)}\right)$. 
The increment $\bm{\delta}_{\ell}^{(m)}$ is applied by ordinary addition for Euclidean variables and by the exponential-map update for $\mathbf R_{\rm UE}\in SO(3)$ \cite{absil2009optimization}. After convergence at iteration $\ell^\star$, the model-$m$ estimate is $\hat{\boldsymbol{\Theta}}^{(m)}=\boldsymbol{\Theta}_{\ell^\star}^{(m)}$, and the final frozen weighted residual score is
\begin{equation}
\widetilde E_m(\hat{\boldsymbol{\Theta}}^{(m)})
=
\left\|
(\mathbf W_{\ell^\star}^{(m)})^{\tfrac{1}{2}}
\mathbf e^{(m)}
\right\|^2 .
\end{equation}
The associated Jacobian matrices are given in Appendix \ref{Appendix A}.

Since $\widetilde{E}_m$ is the terminal score of a row-equilibrated IRLS procedure rather than the exact negative Gaussian log-likelihood, it is used only as a quasi-deviance for model comparison. All $M+1$ candidate models are refined first, and the LoS hypothesis is then selected by the quasi-Akaike information criterion (QAIC) \cite{8498082}:
\begin{equation}\label{AIC 2}
    \mathrm{QAIC}_{m} = 2d_{\Theta}^{(m)} + \tfrac{n}{\hat{c}}\ln\left( \tfrac{\widetilde{E}_m}{n} \right),
\end{equation}
where $n=k_d|\mathcal{I}_{\rm{inliers}}|$, with $k_d=3$ (or 5) for 2-D (or 3-D) measurements, $d_{\Theta}^{(m)}$ is the model degrees of freedom, and $\hat c$ is the residual scale. The model with the smallest \eqref{AIC 2} determines the LoS state.

\subsubsection{Initialization} The refinement variables are initialized from the coarse-stage output. Specifically, the UE state is set to the coarse estimate returned by Algorithm \ref{algorithm 1}, while the scattering points, which are not explicit variables in the coarse formulation, are recovered analytically from this coarse UE-state estimate.
 
\textbf{UE state} ${\bf{x}}$:
The coarse UE-state estimate is used directly, i.e., $\mathbf{x}^{(0)}=\hat{\mathbf{x}}_0$, so no external UE-state initialization is required.

\textbf{Scattering Point} ${\mathbf{p}}_i$:
Given the coarse UE position $\hat{\bf{p}}_{\rm{UE}}$ and orientation $\hat{\mathbf{R}}_{\rm{UE}}$, each retained NLoS-1 scattering point is initialized as

\begin{equation}\label{Pi}
    {{\bf{p}}_i} = 
\mathbf{p}_{\rm{BS}}+\frac{d_{l,i}\mathbf{u}_i+d_{r,i}\hat{\mathbf{v}}_i+\hat{\mathbf{t}}}{2},
\end{equation}
where $[{{d_{l,i}}}, {{d_{r,i}}}]^{\top} = {\left( {\left[ {{{\bf{u}}_i},\; - {{\hat {\bf{v}}}_i}} \right]} \right)^\dag }\hat {\bf{t}}$.

\subsubsection{Computational Complexity}\label{computational complexity}

For the closed-form-like algorithm, given $N$ observed paths, the consensus-based exhaustive minimal-set search tests all $N_\mathrm{min}$-subsets of the $N$ paths, which yields $L= \tfrac{{|{\cal Z}|!}}{{(|{\cal Z}| - {N_{{\rm{min}}}})!{N_{{\rm{min}}}}!}}.$ minimal-set hypotheses. For each hypothesis, the 2-D and 3-D estimations adopt a slightly different strategy. On the one hand, a candidate 2-D solution is obtained via a discrete search over ${{\mathcal{|A|}}}$ UE headings. For each heading, the closed-form update of 
\eqref{solving} is obtained by solving a pseudoinverse whose dominant complexity is of order $\mathcal{O}\left(N_{\rm{min}}^2\right)$. Overall, the 2-D estimation has complexity
$\mathcal{O}(L{{\cal|A|}}N_{\rm{min}}^2)$. On the other hand, a 3-D candidate is obtained from the minimal subset via $K_{\mathrm{cl}}$ alternating updates. For each alternating update, the closed-form update
\eqref{solving} is of order $\mathcal{O}\left(N_{\rm{min}}^2\right)$, while evaluating the perturbation vector of \eqref{delta_UE} has a dominant complexity of order $\mathcal{O}\left(N_{\rm{min}}^2\right)$. Consequently, the overall 3-D estimation has complexity $\mathcal{O}(LK_{\rm{cl}}N_{\rm{min}}^2)$.

Finally, for the iterative estimation, the complexity is dominated by the computation of the pseudoinverse.
Assume $M$ inliers are selected with the closed-form-like algorithm, which yields $M+1$ candidate models,
the pseudoinverse computed in each iteration scales on the order of $\mathcal{O}(M^3)$. With $K_\mathrm{ls}$ iterations, the refinement complexity is $\mathcal{O}({K_\mathrm{ls}}(M+1)M^3)$.

\section{Formulation-Specific Local-Rank and Minimal-Sample Implications}\label{Formulation-Specific Local-Rank and Minimal-Sample Implications}
The radio SLAM problem is only identifiable if a sufficient number of inliers exist. Herein, we first analyze the minimum number of inliers required to ensure a unique solution for ${\bf{x}}$ in the 3-D scenario, based on which the required number for the planar scenario is then derived.

\subsection{3-D Scenario}
To analyze the minimum number of inliers required by the proposed formulation, we examine the rank and nullspace of the per-path Jacobian. The rank indicates how many independent first-order constraint directions an inlier can provide, whereas the nullspace identifies the UE-state perturbations that remain locally unobservable from that path.
The Jacobian matrix of the residual error $\mathbf{r}_i(\mathbf{x})$ is
\begin{equation}\label{Jx}
\nabla_{\mathbf{x}}\mathbf{r}_i
= 
\Bigl[
\underbrace{\mathbf{M}_i}_{\nabla_{{\mathbf{p}}_{\rm{UE}}}\mathbf{r}_i}
\quad
\underbrace{c(\mathbf{v}_i - \mathbf{u}_i)}_{\nabla_{{b}_{\rm{UE}}}\mathbf{r}_i}
\quad
\underbrace{-  \mathbf{G}_i {\left[ {{{\bf{v}}_i}} \right]_ \times }}_{\nabla_{{\mathbf{R}}_{\rm{UE}}}\mathbf{r}_i}
\Bigr]\in\mathbb{R}^{3\times 7},
\end{equation}
where $\mathbf{M}_i$ is given by \eqref{M}, and
$\mathbf{G}_i = \big(\mathbf{u}_i^\top \mathbf{t} - d_i\big)\mathbf{I} + \mathbf{u}_i \mathbf{t}^\top - \mathbf{t} \mathbf{u}_i^\top \in\mathbb{R}^{3\times 3}$.

\begin{theorem}[Single-Measurement Jacobian Rank Bounds]\label{thm:Ji-rank}
Let $\mathbf{n}_i = \mathbf{u}_i+\mathbf{v}_i$. For the Jacobian $\nabla_{\mathbf{x}}\mathbf{r}_i$, the following statements hold:
\begin{enumerate}
\item For a NLoS-1 path, i.e., $\mathbf{u}_i\neq -\mathbf{v}_i$, then
\begin{equation}\label{eq:Ei-rank-null}
\mathrm{rank}\big(\nabla_{{\mathbf{p}}_{\rm{UE}}}\mathbf{r}_i\big)=2,\quad \mathcal{N}\big(\nabla_{{\mathbf{p}}_{\rm{UE}}}\mathbf{r}_i\big)=\mathrm{span}\{\mathbf{n}_i\}.
\end{equation}
For a LoS path, i.e., $\mathbf{u}_i=-\mathbf{v}_i$, ${\rm{rank}}\big(\nabla_{{\mathbf{p}}_{\rm{UE}}}\mathbf{r}_i\big)=1$.
\item The sub-Jacobian $\left[\nabla_{{\mathbf{p}}_{\rm{UE}}}\mathbf{r}_i\;\nabla_{{b}_{\rm{UE}}}\mathbf{r}_i\right]$ satisfies
\begin{equation}\label{eq:Jtb-rank}
\operatorname{rank}\!\big([\nabla_{{\mathbf{p}}_{\rm{UE}}}\mathbf{r}_i\;\nabla_{{b}_{\rm{UE}}}\mathbf{r}_i]\big)
=\operatorname{rank}(\nabla_{{\mathbf{p}}_{\rm{UE}}}\mathbf{r}_i).
\end{equation}
\item 
The rotation sub-Jacobian $\nabla_{{\mathbf{R}}_{\rm{UE}}}\mathbf{r}_i$ contributes a component along $\mathbf{n}_i$ if and only if
\begin{equation}\label{eq:out-of-plane-cond}
\mathbf{n}_i^{\top}\left(\nabla_{{\mathbf{R}}_{\rm{UE}}}\mathbf{r}_i\right)\neq 0
\quad \Longleftrightarrow\quad
\mathbf{v}_i\times (\mathbf{G}_i^{\top}\mathbf{n}_i)\neq \mathbf{0}.
\end{equation}
\end{enumerate}
\end{theorem}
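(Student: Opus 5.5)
The plan is to exploit the fact that $\mathbf{M}_i$ is symmetric and built only from the two unit vectors $\mathbf u_i,\mathbf v_i$. Its spectrum can then be read off on an adapted basis, and the remaining two statements reduce to checking orthogonality against the nullspace. Throughout, write $c_i=\mathbf u_i^\top\mathbf v_i\in[-1,1]$ and use $\|\mathbf u_i\|=\|\mathbf v_i\|=1$.

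\textbf{Statement 1.} I will evaluate $\mathbf M_i$ on three test vectors.
\begin{itemize}
\item For $\mathbf n_i=\mathbf u_i+\mathbf v_i$, we have $\mathbf u_i^\top\mathbf n_i=\mathbf v_i^\top\mathbf n_i=1+c_i$. Hence $\mathbf M_i\mathbf n_i=(1+c_i)(\mathbf v_i+\mathbf u_i)-(1+c_i)\mathbf n_i=\mathbf 0$.
\item For $\mathbf u_i-\mathbf v_i$, a direct expansion gives $\mathbf M_i(\mathbf u_i-\mathbf v_i)=-2(\mathbf u_i-\mathbf v_i)$.
\item For any $\mathbf w\perp\mathrm{span}\{\mathbf u_i,\mathbf v_i\}$, we get $\mathbf M_i\mathbf w=-(1+c_i)\mathbf w$.
\end{itemize}

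When $\mathbf u_i$ and $\mathbf v_i$ are linearly independent, $\{\mathbf n_i,\mathbf u_i-\mathbf v_i,\mathbf w\}$ is an orthogonal basis. Its eigenvalues are $0,-2,-(1+c_i)$, and the last is nonzero because $c_i>-1$. This gives rank $2$ with $\mathcal N(\mathbf M_i)=\mathrm{span}\{\mathbf n_i\}$.

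The boundary case $\mathbf u_i=\mathbf v_i$ is handled separately. There $\mathbf M_i=2\mathbf u_i\mathbf u_i^\top-2\mathbf I$, which has rank $2$ and kernel $\mathrm{span}\{\mathbf u_i\}=\mathrm{span}\{\mathbf n_i\}$.

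In the LoS case $\mathbf v_i=-\mathbf u_i$, we have $c_i=-1$, so $\mathbf M_i=-2\mathbf u_i\mathbf u_i^\top$, which has rank $1$. I will remark that the LoS kernel is then the whole plane $\mathbf u_i^\perp$, consistent with the statement excluding it from \eqref{eq:Ei-rank-null}.

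\textbf{Statement 2.} Since $\mathbf M_i$ is symmetric, $\mathcal R(\mathbf M_i)=\mathcal N(\mathbf M_i)^\perp$. It therefore suffices to show that the clock column $c(\mathbf v_i-\mathbf u_i)$ lies in $\mathcal R(\mathbf M_i)$.
\begin{itemize}
\item NLoS-1: $\mathbf n_i^\top(\mathbf v_i-\mathbf u_i)=\|\mathbf v_i\|^2-\|\mathbf u_i\|^2=0$, so the column is in $\mathbf n_i^\perp=\mathcal R(\mathbf M_i)$.
\item LoS: the column equals $-2c\,\mathbf u_i\in\mathrm{span}\{\mathbf u_i\}=\mathcal R(\mathbf M_i)$.
\end{itemize}
Appending a column that lies in the range does not increase the rank, which yields \eqref{eq:Jtb-rank}.

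\textbf{Statement 3.} By Statements 1 and 2, the position and clock blocks together span exactly $\mathbf n_i^\perp$ in the NLoS-1 case. A ``component along $\mathbf n_i$'' contributed by the rotation block is therefore precisely a nonzero value of $\mathbf n_i^\top\nabla_{\mathbf R_{\rm UE}}\mathbf r_i$. Transposing gives
\begin{equation*}
-\big([\mathbf v_i]_\times\big)^\top\mathbf G_i^\top\mathbf n_i=[\mathbf v_i]_\times\mathbf G_i^\top\mathbf n_i=\mathbf v_i\times(\mathbf G_i^\top\mathbf n_i),
\end{equation*}
using the skew-symmetry $[\mathbf v_i]_\times^\top=-[\mathbf v_i]_\times$. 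The equivalence \eqref{eq:out-of-plane-cond} follows immediately.

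There is no real obstacle in this argument. The only care needed is in Statement 1, where the degenerate NLoS-1 configuration $\mathbf u_i=\mathbf v_i$ must be treated separately because the adapted basis collapses there.
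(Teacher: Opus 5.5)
Your proof is correct and follows essentially the paper's route: the same eigen-decomposition of $\mathbf{M}_i$ on $\{\mathbf{n}_i,\mathbf{u}_i-\mathbf{v}_i,\mathbf{u}_i\times\mathbf{v}_i\}$ for Statement 1, showing the clock column lies in the range of $\mathbf{M}_i$ for Statement 2, and the transposition (triple-product) identity for the rotation block in Statement 3. There are two minor differences. You treat the degenerate case $\mathbf{u}_i=\mathbf{v}_i$ explicitly, which the paper's basis argument glosses over. You also verify range membership through symmetry and orthogonality to $\mathbf{n}_i$, whereas the paper uses the identity $\mathbf{M}_i\mathbf{u}_i=\mathbf{v}_i-\mathbf{u}_i$, which needs no case split.
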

\noindent \textit{Geometric Interpretation.} Theorem~1 shows that a single path contributes only a low-dimensional path-induced constraint subspace. In particular, an NLoS-1 path leaves a sliding direction $\mathrm{span}\{\mathbf n_i\}$, whereas a LoS path is more degenerate and contributes only one independent direction. The clock-bias column does not create an additional single-path constraint direction.
\begin{proof}
First, consider the basis $\{\mathbf{n}_i,\mathbf{g}_i,\mathbf{q}_i\}$ with $\mathbf{g}_i=\mathbf{u}_i-\mathbf{v}_i$, and $\mathbf{q}_i=\mathbf{u}_i\times \mathbf{v}_i$.
A direct substitution into~\eqref{M} yields
$\mathbf{M}_i \mathbf{n}_i=\mathbf{0}$, $\mathbf{M}_i \mathbf{g}_i=-2\mathbf{g}_i$, and $\mathbf{M}_i \mathbf{q}_i=-(1+\mathbf{u}_i^{\top}\mathbf{v}_i)\mathbf{q}_i$.
Hence, when $\mathbf{u}_i\neq -\mathbf{v}_i$ we have one zero eigenvalue and two nonzero eigenvalues, implying
$\operatorname{rank}(\mathbf{M}_i)=2$ and $\mathcal{N}(\mathbf{M}_i)=\mathrm{span}\{\mathbf{n}_i\}$.
If $\mathbf{u}_i=-\mathbf{v}_i$, then $\mathbf{q}_i=\mathbf{0}$ and only $\mathbf{g}_i$ remains as a nonzero-eigen direction, giving rank one.
Second, from~\eqref{M} and $\|\mathbf{u}_i\|=\|\mathbf{v}_i\|=1$ we obtain the identity
\begin{equation}\label{eq:Ei-ui}
\mathbf{M}_i\mathbf{u}_i = \mathbf{v}_i-\mathbf{u}_i,
\end{equation}
which implies $-(\mathbf{v}_i-\mathbf{u}_i)=-\mathbf{M}_i \mathbf{u}_i$ and hence the column $\nabla_{{b}_{\rm{UE}}}\mathbf{r}_i$ lies in the column space of $\nabla_{{\mathbf{p}}_{\rm{UE}}}\mathbf{r}_i$.
Therefore, concatenating $\nabla_{{b}_{\rm{UE}}}\mathbf{r}_i$ cannot increase the rank beyond $\operatorname{rank}(\nabla_{{\mathbf{p}}_{\rm{UE}}}\mathbf{r}_i)$, proving~\eqref{eq:Jtb-rank}.
Finally, for the rotation sub-Jacobian, the left-perturbation form satisfies
\begin{equation}
\nabla_{{\mathbf{R}}_{\rm{UE}}}\mathbf{r}_i
= -\mathbf{G}_i[\mathbf{v}_i]_{\times} .
\end{equation}
Applying the scalar triple-product identity gives
\begin{equation}
\mathbf{n}_i^{\top}\mathbf{G}_i(\bm{\delta}\times \mathbf{v}_i)
=\bm{\delta}^{\top}\big(\mathbf{v}_i\times(\mathbf{G}_i^{\top}\mathbf{n}_i)\big),
\end{equation}
which proves the equivalence~\eqref{eq:out-of-plane-cond}.
\end{proof}

\begin{corollary}[No Rank Contribution at a Perfect Fit]\label{cor:JR}
Under the assumptions of Theorem~\ref{thm:Ji-rank}, if the model is perfectly satisfied
for measurement $i$, i.e.,
\begin{equation}\label{eq:perfect-fit}
\mathbf{r}_i(\mathbf{x}^\star)=\mathbf{0},
\end{equation}
then the rotation sub-Jacobian has nullspace $\mathrm{span}\{\mathbf{v}_i^\star\}$ and does not contribute any component outside $\mathbf{n}_i^{\perp}$:
\begin{equation}\label{eq:no-out-of-plane}
\mathbf{n}_i^{\top}\left(\nabla_{{\mathbf{R}}_{\rm{UE}}}\big|_{\mathbf{x}^\star}\mathbf{r}_i\right)=\mathbf{0}
\;\Longrightarrow\;
\mathrm{Col}\big(\nabla_{{\mathbf{R}}_{\rm{UE}}}\big|_{\mathbf{x}^\star}\mathbf{r}_i\big)\subseteq \mathbf{n}_i^{\perp}.
\end{equation}
Consequently, the single-measurement Jacobian evaluated at a perfect fit has effective rank of $\mathrm{rank}\big(\nabla_{{\mathbf{p}}_{\rm{UE}}}\mathbf{r}_i\big)$,
in the sense that all first-order variations of $\mathbf{r}_i$ lie in the same 2-D subspace $\mathbf{n}_i^{\perp}$.
\end{corollary}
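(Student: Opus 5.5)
The plan is to make the perfect-fit condition \eqref{eq:perfect-fit} explicit in the basis $\{\mathbf{n}_i,\mathbf{g}_i,\mathbf{q}_i\}$ from the proof of Theorem~\ref{thm:Ji-rank}. I will then use item 3 of that theorem, which reduces the claim to showing $\mathbf{v}_i\times(\mathbf{G}_i^{\top}\mathbf{n}_i)=\mathbf{0}$ at $\mathbf{x}^\star$.

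\textbf{Step 1: what a perfect fit forces on $\mathbf{t}$.} Since $\mathbf{M}_i$ is symmetric, I project $\mathbf{M}_i\mathbf{t}=d_i(\mathbf{v}_i-\mathbf{u}_i)=-d_i\mathbf{g}_i$ onto the three eigen-directions, using the eigen-relations already established.
\begin{itemize}
\item The $\mathbf{n}_i$-component is $0=0$, since $\mathbf{n}_i^{\top}\mathbf{g}_i=\|\mathbf{u}_i\|^2-\|\mathbf{v}_i\|^2=0$.
\item The $\mathbf{g}_i$-component gives $\mathbf{g}_i^{\top}\mathbf{t}=\tfrac{d_i}{2}\|\mathbf{g}_i\|^2=d_i(1-\mathbf{u}_i^{\top}\mathbf{v}_i)$.
\item The $\mathbf{q}_i$-component gives $(1+\mathbf{u}_i^{\top}\mathbf{v}_i)\,\mathbf{q}_i^{\top}\mathbf{t}=0$, hence $\mathbf{q}_i^{\top}\mathbf{t}=0$ in the NLoS-1 case.
\end{itemize}
So at $\mathbf{x}^\star$ the displacement $\mathbf{t}$ lies in the propagation plane $\mathrm{span}\{\mathbf{u}_i,\mathbf{v}_i\}$. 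Its $\mathbf{g}_i$-coordinate is pinned by $d_i$, while its $\mathbf{n}_i$-coordinate is free; this is the sliding direction of Theorem~\ref{thm:Ji-rank}. In the LoS case I would instead verify directly that $\mathbf{t}$ must be a multiple of $\mathbf{u}_i$, and treat that case separately.

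\textbf{Step 2: show $\mathbf{G}_i^{\top}\mathbf{n}_i\parallel\mathbf{v}_i$.} Expanding the transpose,
\begin{equation*}
\mathbf{G}_i^{\top}\mathbf{n}_i=(\mathbf{u}_i^{\top}\mathbf{t}-d_i)\mathbf{n}_i+(1+\mathbf{u}_i^{\top}\mathbf{v}_i)\mathbf{t}-(\mathbf{t}^{\top}\mathbf{n}_i)\mathbf{u}_i .
\end{equation*}
By Step 1 every term lies in the plane of $\mathbf{u}_i,\mathbf{v}_i$. It therefore suffices to show that its inner product with $\mathbf{w}_i=\mathbf{u}_i-(\mathbf{u}_i^{\top}\mathbf{v}_i)\mathbf{v}_i$ vanishes, where $\mathbf{w}_i$ is the in-plane direction orthogonal to $\mathbf{v}_i$.

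To do this, I write $\mathbf{t}=\alpha\mathbf{n}_i+\beta\mathbf{g}_i$ with $\beta=d_i/2$. The free coefficient $\alpha$ must cancel, which is a good consistency check. Substituting and simplifying with $\|\mathbf{u}_i\|=\|\mathbf{v}_i\|=1$ should reduce $\mathbf{w}_i^{\top}\mathbf{G}_i^{\top}\mathbf{n}_i$ to a multiple of $\mathbf{g}_i^{\top}\mathbf{t}-d_i(1-\mathbf{u}_i^{\top}\mathbf{v}_i)$, which is zero by Step 1. Item 3 of Theorem~\ref{thm:Ji-rank} then gives $\mathbf{n}_i^{\top}\nabla_{\mathbf{R}_{\rm UE}}\mathbf{r}_i=\mathbf{0}$, i.e.\ $\mathrm{Col}(\nabla_{\mathbf{R}_{\rm UE}}\mathbf{r}_i)\subseteq\mathbf{n}_i^{\perp}$. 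I expect this algebraic reduction to be the main computational obstacle, and I would check it first with a concrete planar parametrization ($\mathbf{u}_i,\mathbf{v}_i$ at angles $\pm\gamma$ from an axis).

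\textbf{Step 3: nullspace and effective rank.} Clearly $[\mathbf{v}_i]_\times\mathbf{v}_i=\mathbf{0}$, so $\mathrm{span}\{\mathbf{v}_i^\star\}$ lies in the nullspace. For equality I need $\mathbf{G}_i$ to be injective on $\mathbf{v}_i^{\perp}=\mathrm{Col}([\mathbf{v}_i]_\times)$. I would compute $\mathbf{G}_i$ on the basis $\{\mathbf{q}_i,\ \mathbf{v}_i\times\mathbf{q}_i\}$ and show the $2\times2$ image block is nonsingular, which holds at least for nondegenerate geometry ($d_i\neq\mathbf{u}_i^{\top}\mathbf{t}$). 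This nondegeneracy is where I expect to state a mild genericity caveat.

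Finally, the range of the full row-block $[\nabla_{\mathbf{p}_{\rm UE}}\mathbf{r}_i\ \nabla_{b_{\rm UE}}\mathbf{r}_i\ \nabla_{\mathbf{R}_{\rm UE}}\mathbf{r}_i]$ is contained in $\mathbf{n}_i^{\perp}$. The first two blocks lie there by items 1–2 of Theorem~\ref{thm:Ji-rank}, and the third by Step 2. Hence all first-order variations of $\mathbf{r}_i$ lie in the 2-D subspace $\mathbf{n}_i^{\perp}=\mathrm{Col}(\mathbf{M}_i)$, giving effective rank $\mathrm{rank}(\nabla_{\mathbf{p}_{\rm UE}}\mathbf{r}_i)$.
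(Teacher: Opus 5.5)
Your NLoS-1 argument is correct and follows the paper's route: the perfect fit confines $\mathbf{t}$ to $\mathrm{span}\{\mathbf{u}_i,\mathbf{v}_i\}$, then you show $\mathbf{G}_i^{\top}\mathbf{n}_i\parallel\mathbf{v}_i$ and invoke item 3 of Theorem~\ref{thm:Ji-rank}. Your Step-2 reduction works out, with $\alpha$ cancelling as you expected: setting $\kappa_i=\mathbf{u}_i^{\top}\mathbf{v}_i$, one gets $\mathbf{w}_i^{\top}\mathbf{G}_i^{\top}\mathbf{n}_i=(1+\kappa_i)\,[\mathbf{g}_i^{\top}\mathbf{t}-d_i(1-\kappa_i)]$. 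There is a quicker route, which also covers the case $\mathbf{u}_i=\mathbf{v}_i$ where your basis and your $\mathbf{w}_i$ collapse. Since $\mathbf{M}_i\mathbf{u}_i=\mathbf{v}_i-\mathbf{u}_i$, the perfect fit reads $\mathbf{M}_i(\mathbf{t}-d_i\mathbf{u}_i)=\mathbf{0}$, so $\mathbf{t}=d_i\mathbf{u}_i+s\mathbf{n}_i$ for some scalar $s$. Substituting gives $\mathbf{G}_i=s\big((1+\kappa_i)\mathbf{I}-[\mathbf{q}_i]_\times\big)$ and $\mathbf{G}_i^{\top}\mathbf{n}_i=2s(1+\kappa_i)\mathbf{v}_i$, and the bracketed matrix has determinant $2(1+\kappa_i)^2>0$. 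Because $\mathbf{u}_i^{\top}\mathbf{t}-d_i=s(1+\kappa_i)$, your caveat $d_i\neq\mathbf{u}_i^{\top}\mathbf{t}$ is exactly the condition for the nullspace to equal $\mathrm{span}\{\mathbf{v}_i\}$. The caveat is genuinely needed, since for $s=0$ (scatterer at the UE) the rotation block vanishes; the paper's proof omits it.

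The step that would fail is your LoS plan. With $\mathbf{u}_i=-\mathbf{v}_i$ you have $\mathbf{M}_i=-2\mathbf{u}_i\mathbf{u}_i^{\top}$ and $\mathbf{r}_i=-2\mathbf{u}_i(\mathbf{u}_i^{\top}\mathbf{t}-d_i)$. The perfect fit therefore gives only $\mathbf{u}_i^{\top}\mathbf{t}=d_i$, and the component of $\mathbf{t}$ orthogonal to $\mathbf{u}_i$ is unconstrained. So you cannot derive $\mathbf{t}\parallel\mathbf{u}_i$ from $\mathbf{r}_i(\mathbf{x}^\star)=\mathbf{0}$, and $\mathbf{G}_i=\mathbf{u}_i\mathbf{t}^{\top}-\mathbf{t}\mathbf{u}_i^{\top}$ is generally nonzero. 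In any case, the nullspace claim $\mathrm{span}\{\mathbf{v}_i\}$ is false for LoS: the rotation block either has a 2-D nullspace or is zero. That part of the corollary must be restricted to NLoS-1. What you can prove for LoS is the effective-rank statement. Since $\mathbf{u}_i^{\top}(\bm{\delta}\times\mathbf{v}_i)=0$, the rotation block sends $\bm{\delta}$ to $\mathbf{u}_i\,\mathbf{t}^{\top}(\bm{\delta}\times\mathbf{v}_i)\in\mathrm{span}\{\mathbf{u}_i\}=\mathrm{Range}(\mathbf{M}_i)$. The clock column is $-2c\,\mathbf{u}_i$, so the effective rank is still one. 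The paper's one-line LoS claim $\nabla_{\mathbf{R}_{\rm UE}}\mathbf{r}_i=\mathbf{0}$ holds only when $\mathbf{t}=d_i\mathbf{u}_i$, i.e., at the true physical LoS geometry. If that is the reading you want, state it as an assumption rather than deriving it from the residual. Note also that $\mathbf{n}_i=\mathbf{0}$ for LoS, so the inclusion in $\mathbf{n}_i^{\perp}$ is vacuous and the ``2-D subspace'' wording applies only to NLoS-1.
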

\noindent \textit{Geometric Interpretation.} Near a consistent solution, the rotational perturbation remains confined to the same path-induced subspace $\mathbf n_i^\perp$. Hence, even the rotational contribution of a single path does not generally introduce a new independent first-order direction.
\begin{proof}[Proof]
At a perfect fit~\eqref{eq:perfect-fit}, for a LoS path, i.e., $\mathbf{u}^\top_i\mathbf{t}^{\star} = d_i^\star$, we have $\nabla_{{\mathbf{R}}_{\rm{UE}}}\mathbf{r}_i= \mathbf{0}$. On the other hand, for an NLoS-1 path, we have $\mathbf{M}_i^\star\mathbf{t}^\star=d_i^\star(\mathbf{v}_i^\star-\mathbf{u}_i)$.
This equality implies that $\mathbf{t}^\star$ lies in the plane spanned by $\{\mathbf{u}_i,\mathbf{v}_i^\star\}$ (i.e., it has no component
along $\mathbf{u}_i\times \mathbf{v}_i^\star$). Substituting this structure into the left-perturbation Jacobian gives $\bm{\delta}\parallel \mathbf{v}_i^\star$, which implies $\mathcal{N}\big(\nabla_{{\mathbf{R}}_{\rm{UE}}}\big|_{\mathbf{x}^\star}\mathbf{r}_i\big)=\mathrm{span}\{\mathbf{v}_i^\star\}$. Substituting the perfect-fit structure into $\mathbf{G}_i^{\top}\mathbf{n}_i$ gives $\mathbf{G}_i^{\top}\mathbf{n}_i\parallel \mathbf{v}_i^\star$, and hence $\mathbf{v}_i^\star\times(\mathbf{G}_i^{\top}\mathbf{n}_i)=\mathbf{0}$.
By~\eqref{eq:out-of-plane-cond}, this yields $\mathbf{n}_i^{\top}\left(\nabla_{{\mathbf{R}}_{\rm{UE}}}\big|_{\mathbf{x}^\star}\mathbf{r}_i\right)=\mathbf{0}$ and~\eqref{eq:no-out-of-plane}.
\end{proof}

\begin{corollary}[Minimal Sample Size $N_{\rm{min}}$]\label{cor:min-sample}
For the proposed unified coarse-stage formulation, an NLoS-1 measurement provides at most two independent constraint directions on $\mathbf{x}$, whereas a LoS path provides only one (Theorem~\ref{thm:Ji-rank}). Since the total number of degrees of freedom is $7$, for the proposed coarse solver, the resulting local-rank heuristic suggests
\begin{equation}\label{eq:min-sample}
\left\{\begin{array}{cc}
2N_{\rm{min}} \ge 7 & \mathrm{for\,\,NLoS} \\
2(N_{\rm{min}}-1) + 1 \ge 7 & \mathrm{for\,\,LoS}
\end{array}\right.
\; \Rightarrow \; N_{\rm{min}}\ge 4.
\end{equation}
\end{corollary}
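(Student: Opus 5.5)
The plan is a rank-counting argument that stacks the single-path Jacobians~\eqref{Jx} over a minimal subset $\mathcal J$ and asks when the stacked matrix $\nabla_{\mathbf x}\mathbf r_{\mathcal J}\in\mathbb R^{3|\mathcal J|\times 7}$ can reach full column rank $7$. Full column rank is the local condition for the Gauss--Newton/closed-form coarse solver of Section~\ref{Amplitude-Independent Optimization Method Without Outliers} to have an isolated solution. The rank of a row-stacked matrix is at most the sum of the ranks of its blocks, so the argument reduces to bounding the rank each path can contribute at a consistent solution, summing these bounds, and comparing the sum with $\dim\mathbf x=7$ (three for position, one for clock bias, three for orientation).

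\textbf{Per-path bounds.} For an NLoS-1 path, Theorem~\ref{thm:Ji-rank}(1) gives $\mathrm{rank}(\mathbf M_i)=2$. Theorem~\ref{thm:Ji-rank}(2) then shows that appending the clock-bias column does not raise this rank, since that column equals $-\mathbf M_i\mathbf u_i$ up to scaling. Corollary~\ref{cor:JR} adds that, at a perfect fit, the rotation block has its column space inside $\mathbf n_i^{\perp}$, and $\mathbf n_i^{\perp}$ is exactly $\mathrm{Col}(\mathbf M_i)$ because $\mathbf M_i$ is symmetric with nullspace $\mathrm{span}\{\mathbf n_i\}$. Hence the whole $3\times 7$ block satisfies $\mathrm{rank}(\nabla_{\mathbf x}\mathbf r_i)\le 2$.

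For a LoS path, Theorem~\ref{thm:Ji-rank}(1) gives rank one for the position block, and part~(2) again absorbs the clock-bias column. The proof of Corollary~\ref{cor:JR} shows that the rotation block vanishes at a perfect fit. The LoS block therefore has rank at most $1$.

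\textbf{Counting.} Subadditivity of rank gives
\[
\mathrm{rank}(\nabla_{\mathbf x}\mathbf r_{\mathcal J})\le \textstyle\sum_{i\in\mathcal J}\mathrm{rank}(\nabla_{\mathbf x}\mathbf r_i).
\]
If all paths in $\mathcal J$ are NLoS-1, the right-hand side is at most $2N_{\rm min}$. If one path is LoS, it is at most $2(N_{\rm min}-1)+1$; at most one LoS path exists physically. Requiring the bound to reach $7$ gives $N_{\rm min}\ge 7/2$ in the first case and $N_{\rm min}\ge 4$ in the second. Since $N_{\rm min}$ is an integer, both cases give $N_{\rm min}\ge4$. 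Because the path identity is unknown in the coarse stage, the minimal subset size must cover the worse case, which yields~\eqref{eq:min-sample}.

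\textbf{Main obstacle.} The delicate step is justifying that the rotation block adds no new direction, i.e.\ the NLoS-1 bound of $2$ rather than $3$. Corollary~\ref{cor:JR} establishes this only at an exact fit, whereas with noise the Jacobian has small components along $\mathbf n_i$ and is generically of rank $3$. I would therefore state the result as a local, first-order heuristic evaluated at the true state, which is how the corollary is worded. I would also note that the bound is only necessary: achieving rank $7$ with four paths additionally needs generic path geometry, such as non-parallel $\mathbf n_i$ and distinct propagation planes. I would not attempt to prove that sufficiency part.
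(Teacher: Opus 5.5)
Your proposal is correct and uses the same counting argument as the paper. Per path, the rank is at most $2$ for NLoS-1 and at most $1$ for LoS; these bounds are summed and compared with the $7$ degrees of freedom to give $N_{\rm min}\ge 4$. You also make explicit two points the paper leaves implicit: the NLoS-1 bound of $2$ for the full $3\times 7$ block needs the perfect-fit result of Corollary~\ref{cor:JR}, not Theorem~\ref{thm:Ji-rank} alone, and the summation step is rank subadditivity over the stacked blocks.
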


This statement concerns the proposed residual/Jacobian analysis and should not be interpreted as a fundamental minimum for the broader problem class.

\begin{theorem}[Stacked Nullspace Criterion for Local Identifiability]\label{thm:full-rank}
Let $\nabla_{\mathbf{x}}{\bf{r}}\in\mathbb{R}^{3N\times 7}$ be the stacked Jacobian obtained by vertically concatenating~\eqref{Jx} over $i=1,\ldots,N$. Consider a consistent solution $\mathbf{x}^{\star}$ and evaluate all quantities below at $\mathbf{x}^{\star}$, with the superscript omitted for compactness. Define
\begin{equation}\label{eq:BiPi}
\mathbf{B}_i =
\left[
c(\mathbf{v}_i-\mathbf{u}_i)
\quad
-\mathbf{G}_i[\mathbf{v}_i]_{\times}
\right]\in\mathbb{R}^{3\times4},
\quad
\mathbf{P}_i=\mathbf{M}_i^{\dagger}\mathbf{M}_i,
\end{equation}
where $\mathbf{P}_i$ is the orthogonal projector onto $\operatorname{Range}(\mathbf{M}_i)$. If the reduced stacked nullspace matrix
\begin{equation}\label{eq:H-nullspace}
\mathbf{H}=
\begin{bmatrix}
\mathbf{P}_1 & \mathbf{M}_1^{\dagger}\mathbf{B}_1\\
\vdots & \vdots\\
\mathbf{P}_N & \mathbf{M}_N^{\dagger}\mathbf{B}_N
\end{bmatrix}\in\mathbb{R}^{3N\times 7}
\end{equation}
has full column rank, then $\nabla_{\mathbf{x}}{\bf{r}}$ has full column rank and the UE state $\mathbf{x}$ is locally identifiable.
\end{theorem}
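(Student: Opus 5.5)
We prove the contrapositive: any nonzero vector in the nullspace of the stacked Jacobian $\nabla_{\mathbf{x}}\mathbf{r}$ produces a nonzero vector in the nullspace of $\mathbf{H}$. Full column rank of $\mathbf{H}$ therefore forces full column rank of $\nabla_{\mathbf{x}}\mathbf{r}$.

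Split a perturbation $\bm{\eta}\in\mathbb{R}^7$ as $\bm{\eta}=[\bm{\eta}_p^\top,\bm{\eta}_b^\top]^\top$. Here $\bm{\eta}_p\in\mathbb{R}^3$ is the position block and $\bm{\eta}_b\in\mathbb{R}^4$ collects the clock-bias and rotation coordinates. By \eqref{Jx}, each block row reads $\nabla_{\mathbf{x}}\mathbf{r}_i\,\bm{\eta}=\mathbf{M}_i\bm{\eta}_p+\mathbf{B}_i\bm{\eta}_b$, with $\mathbf{B}_i$ as in \eqref{eq:BiPi}. Suppose $\nabla_{\mathbf{x}}\mathbf{r}\,\bm{\eta}=\mathbf{0}$, so that $\mathbf{M}_i\bm{\eta}_p=-\mathbf{B}_i\bm{\eta}_b$ for every $i$. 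Left-multiplying by $\mathbf{M}_i^{\dagger}$ gives
\begin{equation*}
\mathbf{P}_i\bm{\eta}_p+\mathbf{M}_i^{\dagger}\mathbf{B}_i\bm{\eta}_b=\mathbf{0},\qquad i=1,\ldots,N,
\end{equation*}
that is, $\mathbf{H}\bm{\eta}=\mathbf{0}$. Two points need care here. First, $\mathbf{M}_i$ is symmetric by \eqref{M}, so $\mathbf{M}_i^{\dagger}\mathbf{M}_i$ is the orthogonal projector onto $\operatorname{Range}(\mathbf{M}_i)=\mathcal{N}(\mathbf{M}_i)^{\perp}$, which matches the stated $\mathbf{P}_i$. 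Second, this step only needs the implication from $\nabla_{\mathbf{x}}\mathbf{r}\,\bm{\eta}=\mathbf{0}$ to $\mathbf{H}\bm{\eta}=\mathbf{0}$, not the reverse. Hence $\mathcal{N}(\nabla_{\mathbf{x}}\mathbf{r})\subseteq\mathcal{N}(\mathbf{H})$, and if $\mathbf{H}$ has full column rank then $\mathcal{N}(\nabla_{\mathbf{x}}\mathbf{r})=\{\mathbf{0}\}$.

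For completeness we also note when the two nullspaces coincide. This is what justifies calling $\mathbf{H}$ a nullspace criterion, not merely a sufficient condition, although the theorem itself does not need it. The reverse implication requires $\mathbf{B}_i\bm{\eta}_b\in\operatorname{Range}(\mathbf{M}_i)$. The clock-bias column lies in this range by \eqref{eq:Ei-ui}. At a consistent solution, Corollary~\ref{cor:JR} gives $\mathrm{Col}(\nabla_{\mathbf{R}_{\rm UE}}\mathbf{r}_i)\subseteq\mathbf{n}_i^{\perp}=\operatorname{Range}(\mathbf{M}_i)$, using Theorem~\ref{thm:Ji-rank} for the NLoS-1 case. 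So $\mathbf{M}_i\mathbf{M}_i^{\dagger}\mathbf{B}_i=\mathbf{B}_i$ for NLoS-1 paths, and the two nullspaces are equal.

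The LoS case needs separate handling. There $\operatorname{Range}(\mathbf{M}_i)=\mathrm{span}\{\mathbf{g}_i\}$, the rotation block vanishes at a perfect fit (proof of Corollary~\ref{cor:JR}), and the clock column is again in the range. The forward inclusion used in the main argument holds unconditionally, so these subtleties only bear on the equivalence remark.

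Finally, local identifiability follows by a standard argument. Full column rank of $\nabla_{\mathbf{x}}\mathbf{r}$ at $\mathbf{x}^\star$ means the residual map $\mathbf{r}$, viewed on $\mathbb{R}^4\times SO(3)$ through the left-perturbation chart, is an immersion at $\mathbf{x}^\star$. By the constant-rank/inverse function theorem it is then injective on a neighbourhood of $\mathbf{x}^\star$, so $\mathbf{x}^\star$ is the unique zero of $\mathbf{r}$ nearby and $\mathbf{x}$ is locally identifiable.
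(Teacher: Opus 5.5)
Your proof is correct and is essentially the paper's argument: take $\delta\mathbf{x}$ in the nullspace of the stacked Jacobian, left-multiply each block row $\mathbf{M}_i\delta\mathbf{p}+\mathbf{B}_i\delta\mathbf{q}=\mathbf{0}$ by $\mathbf{M}_i^{\dagger}$ to obtain $\mathbf{H}\delta\mathbf{x}=\mathbf{0}$, and conclude from the full column rank of $\mathbf{H}$. The one difference is in your favor: the paper invokes Theorem~\ref{thm:Ji-rank} and Corollary~\ref{cor:JR} to place $\mathbf{B}_i\delta\mathbf{q}$ in $\operatorname{Range}(\mathbf{M}_i)$ before applying $\mathbf{M}_i^{\dagger}$, whereas you correctly note that this is unnecessary for the forward inclusion $\mathcal{N}(\nabla_{\mathbf{x}}\mathbf{r})\subseteq\mathcal{N}(\mathbf{H})$ and is needed only for the reverse inclusion, which you also verify correctly for both NLoS-1 and LoS paths.
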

\noindent \textit{Geometric Interpretation.} The matrix $\mathbf H$ makes explicit the possible coupling between translation, clock bias, and rotation perturbations. For an NLoS-1 path, $\mathbf P_i$ projects away the sliding direction $\mathrm{span}\{\mathbf n_i\}$; for a LoS path, it projects onto the one-dimensional LoS constraint direction. Full column rank of $\mathbf H$ means that no nonzero joint perturbation can remain hidden in all path-induced nullspaces simultaneously.
\begin{proof}[Proof]
Let a first-order perturbation be written as
$\delta\mathbf{x}=[\delta\mathbf{p}^{\top},\delta b,\delta\boldsymbol{\omega}^{\top}]^{\top}$ and set $\delta\mathbf{q}=[\delta b,\delta\boldsymbol{\omega}^{\top}]^{\top}$.
For the $i$th measurement,
\begin{equation}
\nabla_{\mathbf{x}}\mathbf{r}_i\,\delta\mathbf{x}
=
\mathbf{M}_i\delta\mathbf{p}+\mathbf{B}_i\delta\mathbf{q}.
\end{equation}
Assume $\delta\mathbf{x}\in\mathcal{N}(\nabla_{\mathbf{x}}{\bf r})$. Then for every $i$,
\begin{equation}\label{eq:null-each-path}
\mathbf{M}_i\delta\mathbf{p}+\mathbf{B}_i\delta\mathbf{q}=\mathbf{0}.
\end{equation}
By Theorem~\ref{thm:Ji-rank}, the clock-bias column lies in $\operatorname{Range}(\mathbf{M}_i)$, and by Corollary~\ref{cor:JR}, at a consistent solution the rotation block also lies in the same path-induced constraint subspace. Hence $\mathbf{B}_i\delta\mathbf{q}\in\operatorname{Range}(\mathbf{M}_i)$, so applying $\mathbf{M}_i^{\dagger}$ to~\eqref{eq:null-each-path} gives
\begin{equation}
\mathbf{P}_i\delta\mathbf{p}+\mathbf{M}_i^{\dagger}\mathbf{B}_i\delta\mathbf{q}=\mathbf{0}.
\end{equation}
Stacking these equations over all measurements yields $\mathbf H\delta\mathbf{x}=\mathbf 0$. If $\mathbf H$ has full column rank, then $\delta\mathbf{x}=\mathbf 0$. Therefore $\mathcal{N}(\nabla_{\mathbf{x}}{\bf r})=\{\mathbf 0\}$, which proves that the stacked Jacobian has full column rank. Local identifiability follows from the standard local-rank condition.
\end{proof}
\noindent \textit{Remark.} The formal identifiability condition is the full-column-rank test on the stacked matrix $\mathbf H$ in~\eqref{eq:H-nullspace}, which explicitly accounts for coupled translation-clock-rotation null directions. Informally, this full-rank condition requires sufficient geometric diversity in $\{\mathbf v_i\}$ and $\{\mathbf n_i=\mathbf u_i+\mathbf v_i\}$. In particular, diversity of $\{\mathbf v_i\}$ suppresses common rotational ambiguity, diversity of $\{\mathbf n_i\}$ suppresses common translational sliding directions, and at least one non-degenerate angle pair with $\mathbf v_i\neq\mathbf u_i$ helps make the clock-bias direction observable.

\subsection{Planar Scenario} Assume a planar setting in which all quantities lie in the $X-Y$ plane and rotation is restricted to yaw only: $\mathbf{t}=\left[\tilde{\mathbf{t}}^\top,0\right]^\top$, $\mathbf{u}=\left[\tilde{\mathbf{u}}^\top,0\right]^\top$, $\mathbf{v}=\left[\tilde{\mathbf{v}}^\top,0\right]^\top$, $\mathbf{R}_{\rm{UE}} = \rm{blkdiag}\left(\tilde{\mathbf{R}}(\alpha_{\rm{UE}}),1\right)$, then substituting into \eqref{solving 4} shows that, for every measurement $i$, the 3-D residual satisfies $\mathbf{r}_i=[\tilde{\mathbf{r}}_i^\top,0]^\top$, where the induced residual in plane is
\begin{equation}  \tilde{\mathbf{r}}_i(\tilde{\mathbf{t}},b,\alpha_{\rm{UE}})=\tilde{\mathbf{M}}_i\tilde{\mathbf{t}} - d_i(\tilde{\mathbf{v}}_i-\tilde{\mathbf{u}}_i)\in\mathbb{R}^2,
\end{equation}
where $\tilde{\mathbf{M}}_i=\tilde{\mathbf{v}}_i \tilde{\mathbf{u}}_i^\top+\tilde{\mathbf{u}}_i \tilde{\mathbf{v}}_i^\top-(\tilde{\mathbf{u}}_i^\top \tilde{\mathbf{v}}_i+1)\mathbf{I}_2$.
\begin{theorem}[Rank and Nullspace Reduction in Planar Scenario]
Under the planar assumptions, the following statements hold:
\begin{enumerate}
\item the planar matrix $\tilde{\mathbf{M}}_i$ satisfies
\begin{equation}
    \operatorname{rank}(\tilde{\mathbf{M}}_i)=1,\quad 
\mathcal N(\tilde{\mathbf{M}}_i)=\mathrm{span}\{\tilde{\mathbf{n}}_i\},
\end{equation}
and $\mathrm{Range}(\tilde{\mathbf{M}}_i)=\tilde{\mathbf{n}}_i^\perp\subset\mathbb{R}^2$ is a 1-D line orthogonal to $\tilde{\mathbf{n}}_i$, where $\tilde{\mathbf{n}}_i=\tilde{\mathbf{u}}_i+\tilde{\mathbf{v}}_i$.
\item Let the unknown vector be $\tilde{\mathbf{x}}=[\tilde{\mathbf{p}}_{UE}^\top,\ b_{\rm{UE}},\ \alpha_{\rm{UE}}]^\top\in\mathbb{R}^4$, near a consistent solution, the per-measurement Jacobian ${\nabla _{\tilde{\mathbf{x}}}}{\tilde{\mathbf{r}}_i}$ is effectively rank one.
\end{enumerate}
\end{theorem}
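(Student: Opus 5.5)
The plan is to treat the planar case as a restriction of the 3-D analysis. The first statement will follow from an eigen-decomposition of $\tilde{\mathbf M}_i$, mirroring the first part of Theorem~\ref{thm:Ji-rank}. The second will follow by showing that every column of $\nabla_{\tilde{\mathbf x}}\tilde{\mathbf r}_i$ lies in the one-dimensional line $\tilde{\mathbf n}_i^{\perp}$, using the second part of Theorem~\ref{thm:Ji-rank} and Corollary~\ref{cor:JR}.

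\emph{Statement 1.} Work in the orthogonal basis $\{\tilde{\mathbf n}_i,\tilde{\mathbf g}_i\}$ with $\tilde{\mathbf g}_i=\tilde{\mathbf u}_i-\tilde{\mathbf v}_i$. Orthogonality holds because $\tilde{\mathbf n}_i^\top\tilde{\mathbf g}_i=\|\tilde{\mathbf u}_i\|^2-\|\tilde{\mathbf v}_i\|^2=0$. The in-plane vector $\mathbf q_i=\mathbf u_i\times\mathbf v_i$ of the 3-D proof is no longer available, since it points along $\mathbf e_3$; this is exactly why one rank is lost. Direct substitution, as in Theorem~\ref{thm:Ji-rank}, gives $\tilde{\mathbf M}_i\tilde{\mathbf n}_i=\mathbf 0$ and $\tilde{\mathbf M}_i\tilde{\mathbf g}_i=-2\tilde{\mathbf g}_i$. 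As a cross-check, $\operatorname{tr}(\tilde{\mathbf M}_i)=2\tilde{\mathbf u}_i^\top\tilde{\mathbf v}_i-2(\tilde{\mathbf u}_i^\top\tilde{\mathbf v}_i+1)=-2$. Hence, for an NLoS-1 path ($\tilde{\mathbf n}_i\neq\mathbf 0$), $\tilde{\mathbf M}_i$ is symmetric with eigenvalues $\{0,-2\}$. This gives rank one, $\mathcal N(\tilde{\mathbf M}_i)=\mathrm{span}\{\tilde{\mathbf n}_i\}$, and $\mathrm{Range}(\tilde{\mathbf M}_i)=\mathrm{span}\{\tilde{\mathbf g}_i\}=\tilde{\mathbf n}_i^{\perp}$.

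The LoS case $\tilde{\mathbf u}_i=-\tilde{\mathbf v}_i$ must be handled separately, because then $\tilde{\mathbf n}_i=\mathbf 0$. Here $\tilde{\mathbf M}_i=-2\tilde{\mathbf u}_i\tilde{\mathbf u}_i^\top$, which is still rank one with range $\mathrm{span}\{\tilde{\mathbf u}_i\}$. I will state this case explicitly and read the nullspace claim as pertaining to NLoS-1 paths, consistent with Theorem~\ref{thm:Ji-rank}.

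\emph{Statement 2.} Write $\nabla_{\tilde{\mathbf x}}\tilde{\mathbf r}_i=[\tilde{\mathbf M}_i\;\; c(\tilde{\mathbf v}_i-\tilde{\mathbf u}_i)\;\; \partial_{\alpha}\tilde{\mathbf r}_i]$ and handle the columns in turn.
\begin{enumerate}
\item The position block spans $\tilde{\mathbf n}_i^{\perp}$ by Statement 1.
\item The clock column is $-c\,\tilde{\mathbf g}_i=-c\,\tilde{\mathbf M}_i\tilde{\mathbf u}_i$ by \eqref{eq:Ei-ui}, so it lies in the same line; this is the planar form of \eqref{eq:Jtb-rank}.
\item For the yaw column, note that a yaw perturbation is the left perturbation $\bm\delta=\delta\alpha\,\mathbf e_3$. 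Thus $\partial_\alpha\mathbf r_i=-\mathbf G_i[\mathbf v_i]_\times\mathbf e_3$, and its third component vanishes under the planar assumptions. At a consistent solution, Corollary~\ref{cor:JR} gives $\mathbf n_i^\top\nabla_{\mathbf R_{\rm UE}}\mathbf r_i=\mathbf 0$, hence $\tilde{\mathbf n}_i^\top\partial_\alpha\tilde{\mathbf r}_i=0$, and the yaw column also lies in $\tilde{\mathbf n}_i^{\perp}$.
\end{enumerate}
All three columns therefore lie in a 1-D subspace, giving rank at most one. The rank equals one because $\tilde{\mathbf M}_i\neq\mathbf 0$.

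In the LoS case, the proof of Corollary~\ref{cor:JR} already shows that the rotation block vanishes at a perfect fit. The clock column is $-2c\tilde{\mathbf u}_i\in\mathrm{span}\{\tilde{\mathbf u}_i\}$. So the rank is again one.

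\emph{Main obstacle.} The delicate step is the yaw column. I must confirm that the embedding $\mathbf t=[\tilde{\mathbf t}^\top,0]^\top$ together with yaw-only rotation makes Corollary~\ref{cor:JR} legitimately applicable; in particular, that $\mathbf G_i[\mathbf v_i]_\times\mathbf e_3$ is in-plane and that the perfect-fit structure survives projection. If that route is awkward, the fallback is a direct computation. Using $\partial_\alpha\tilde{\mathbf v}_i=\mathbf J\tilde{\mathbf v}_i$, where $\mathbf J$ is the $90^\circ$ rotation, I would differentiate $\tilde{\mathbf r}_i$ in $\alpha$. Then I would substitute the planar perfect-fit relation $\tilde{\mathbf M}_i\tilde{\mathbf t}=d_i(\tilde{\mathbf v}_i-\tilde{\mathbf u}_i)$ and verify $\tilde{\mathbf n}_i^\top\partial_\alpha\tilde{\mathbf r}_i=0$.
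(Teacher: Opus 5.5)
Your proposal is correct and follows essentially the same route as the paper's proof. Like the paper, you restrict the eigenstructure of $\mathbf M_i$ to the plane, where the $\mathbf q_i\parallel\mathbf e_3$ direction drops out. You place the clock column in $\mathrm{Range}(\tilde{\mathbf M}_i)$ via \eqref{eq:Ei-ui}, and the yaw column in $\tilde{\mathbf n}_i^{\perp}$ via Corollary~\ref{cor:JR} with $\bm\delta=\delta\alpha\,\mathbf e_3$. You add more explicit computation and a separate LoS treatment that the paper omits; as you correctly note, the nullspace and range claims hold only for NLoS-1.

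There are two small slips. First, the clock column is $c(\tilde{\mathbf v}_i-\tilde{\mathbf u}_i)=+c\,\tilde{\mathbf M}_i\tilde{\mathbf u}_i$, not $-c\,\tilde{\mathbf M}_i\tilde{\mathbf u}_i$. Second, in the LoS case a zero residual only forces $\tilde{\mathbf u}_i^{\top}\tilde{\mathbf t}=d_i$, so $\mathbf G_i=\mathbf u_i\mathbf t^{\top}-\mathbf t\mathbf u_i^{\top}$. The yaw column therefore need not vanish unless $\tilde{\mathbf t}\parallel\tilde{\mathbf u}_i$. However, since $\mathbf e_3\times\mathbf u_i\perp\mathbf u_i$, it is a scalar multiple of $\tilde{\mathbf u}_i$, so your rank-one conclusion is unaffected.
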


\noindent \textit{Geometric Interpretation.} In the planar reduction, each path contributes only one constraint line, while the solution remains locally free along $\tilde{\mathbf n}_i$. Thus, the planar case is the reduced counterpart of the 3-D path-induced constraint-subspace interpretation.

\begin{proof}[Proof] First, under planar reduction, the top-left block $\tilde{\mathbf{M}}_i$ is precisely the restriction of $\mathbf{M}_i$ to the $X-Y$ subspace. Since the planar subspace is 2-D, the corresponding range space is the intersection $\tilde{\mathbf{n}}_i^\perp\cap\mathbb{R}^2$, which is a 1-D subspace (a line). Hence $\operatorname{rank}(\tilde{\mathbf{M}}_i)=1$ and $\mathcal N(\tilde{\mathbf{M}}_i)=\mathrm{span}\{\tilde{\mathbf{n}}_i\}$. Second, due to Theorem~\ref{thm:Ji-rank}, the column associated with 
$b_{\rm{UE}}$, always lies in $\mathrm{Range}(\tilde{\mathbf{M}}_i)$. Further, since $\mathrm{Range}(\tilde{\mathbf{M}}_i)$ is 1-D in 2-D, the rotation contribution cannot provide a new independent direction beyond that line near consistency (Theorem~\ref{thm:Ji-rank}); hence the per-measurement linear information becomes effectively one-dimensional.
\end{proof}

\begin{corollary}[Minimal sample size heuristic in plane]
In the planar specialization, the proposed coarse-stage formulation has four degrees of freedom ($(\tilde{\mathbf{p}}_{\rm{UE}}\in\mathbb{R}^2,\ b_{\rm{UE}}\in\mathbb{R},\ \alpha_{\rm{UE}}\in\mathbb{R})$). Since each measurement contributes one independent constraint direction near consistency, a minimal solver typically requires at least four measurements. 
\end{corollary}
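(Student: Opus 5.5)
The plan is to mirror the 3-D counting argument of Corollary~\ref{cor:min-sample}, replacing the per-path rank bound of Theorem~\ref{thm:Ji-rank} with the planar rank-reduction theorem just stated. First, I will fix the unknown $\tilde{\mathbf{x}}=[\tilde{\mathbf{p}}_{\rm UE}^\top,b_{\rm UE},\alpha_{\rm UE}]^\top\in\mathbb{R}^4$. Its dimension, $2+1+1=4$, will be the target column rank of the stacked planar Jacobian $\nabla_{\tilde{\mathbf{x}}}\tilde{\mathbf{r}}\in\mathbb{R}^{2N\times 4}$. Local identifiability in the sense of Theorem~\ref{thm:full-rank} requires this stacked Jacobian to have full column rank.

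Second, I will invoke part~2 of the planar theorem. Near a consistent solution, each $\nabla_{\tilde{\mathbf{x}}}\tilde{\mathbf{r}}_i$ has all its columns in the line $\mathrm{Range}(\tilde{\mathbf{M}}_i)=\tilde{\mathbf{n}}_i^\perp$, so its effective rank is at most one. This covers both column types:
\begin{itemize}
\item the clock-bias column lies on this line because of the identity $\mathbf{M}_i\mathbf{u}_i=\mathbf{v}_i-\mathbf{u}_i$ restricted to the plane;
\item the yaw column lies on this line because the only rotation axis is $\mathbf{e}_3$, and Corollary~\ref{cor:JR} confines it to $\mathbf{n}_i^\perp$.
\end{itemize}
For a LoS path, $\tilde{\mathbf{u}}_i=-\tilde{\mathbf{v}}_i$, so $\tilde{\mathbf{n}}_i=\mathbf{0}$ and the planar theorem's nullspace statement does not apply directly. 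I plan to handle this case separately. By the proof of Theorem~\ref{thm:Ji-rank}, $\mathbf{M}_i$ then has rank one with range $\mathrm{span}\{\mathbf{g}_i\}$, and $\mathbf{g}_i$ lies in the plane. Hence a LoS path also contributes at most one planar direction.

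Third, I will apply subadditivity of rank to the stacked matrix, $\operatorname{rank}(\nabla_{\tilde{\mathbf{x}}}\tilde{\mathbf{r}})\le\sum_{i=1}^{N}\operatorname{rank}(\nabla_{\tilde{\mathbf{x}}}\tilde{\mathbf{r}}_i)\le N$. Requiring rank $4$ then gives $N\ge 4$, whether or not one of the selected paths is LoS. This yields $N_{\min}\ge 4$, consistent with the $N_{\rm min}$ used in Algorithm~\ref{algorithm 1} for the 2-D case.

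The main obstacle is the word ``typically'': four paths are necessary, but sufficiency holds only under geometric diversity. Concretely, the lines $\tilde{\mathbf{n}}_i^\perp$ must be pairwise non-parallel, and the $4\times4$ reduced matrix $\mathbf{H}$ from Theorem~\ref{thm:full-rank} must be nonsingular. To make this precise, I would argue that $\det\mathbf{H}$ is a nonzero analytic function of the path geometry, so the rank-$4$ condition holds generically. Failing that, I would state the result explicitly as a local-rank heuristic, as the paper does after Corollary~\ref{cor:min-sample}, with degenerate configurations excluded.
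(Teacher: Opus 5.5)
Your proposal is correct and follows the paper's own counting argument. There are four planar unknowns, and the planar rank-reduction theorem gives one effective constraint direction per path near consistency, so at least four paths are needed. You make this explicit through rank subadditivity. Your separate LoS treatment also usefully handles the degenerate case $\tilde{\mathbf n}_i=\mathbf 0$, where the planar theorem's nullspace claim does not literally apply and which the paper glosses over.

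One detail to tighten is the yaw column of the LoS path. Since $\mathbf n_i=\mathbf 0$ there, Corollary~\ref{cor:JR}'s confinement to $\mathbf n_i^{\perp}$ is vacuous. Instead, use that $\mathbf t=d_i\mathbf u_i$ at the consistent LoS geometry gives $\mathbf G_i=\mathbf 0$, so the rotation column vanishes, as noted in the proof of Corollary~\ref{cor:JR}.
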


Importantly, Corollaries 2 and 3 operate at the level of the proposed residual/Jacobian analysis and the corresponding initialization strategy. They do not replace classical geometric solvability results for snapshot radio SLAM under stronger path hypotheses. In particular, classical mixed-path facts, such as the solvability of the one-LoS-plus-one-NLoS-1 case under the corresponding geometric assumptions, remain valid and are not contradicted by the formulation-specific path-count discussion above.
\section{Simulation Results}\label{Results}

\begin{figure}
    \centering
    \begin{subfigure}[b]{0.35\textwidth}
        \centering
        \includegraphics[width=1\linewidth]{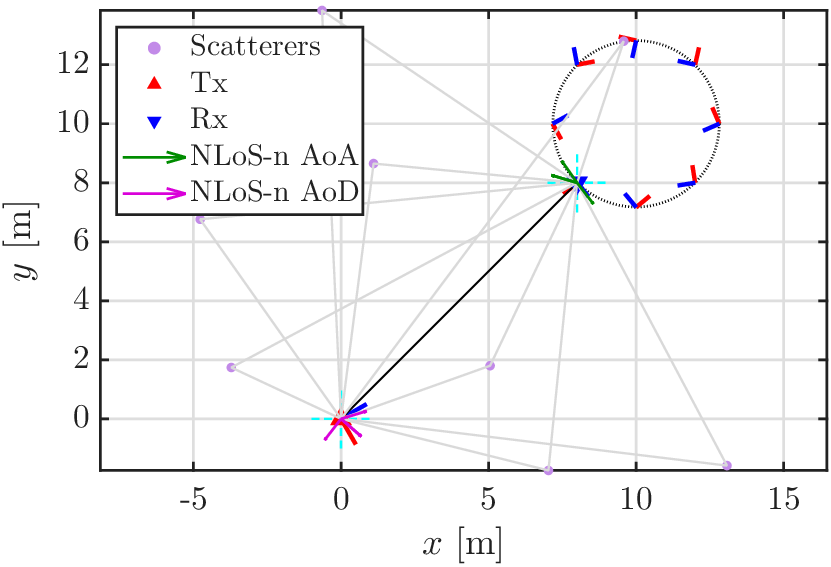}
        \caption{\small The planar simulation setup.}
        \label{simulation_trial}
    \end{subfigure}
    \hfill
    \begin{subfigure}[b]{0.49\textwidth}
        \centering
        \includegraphics[width=1\linewidth]{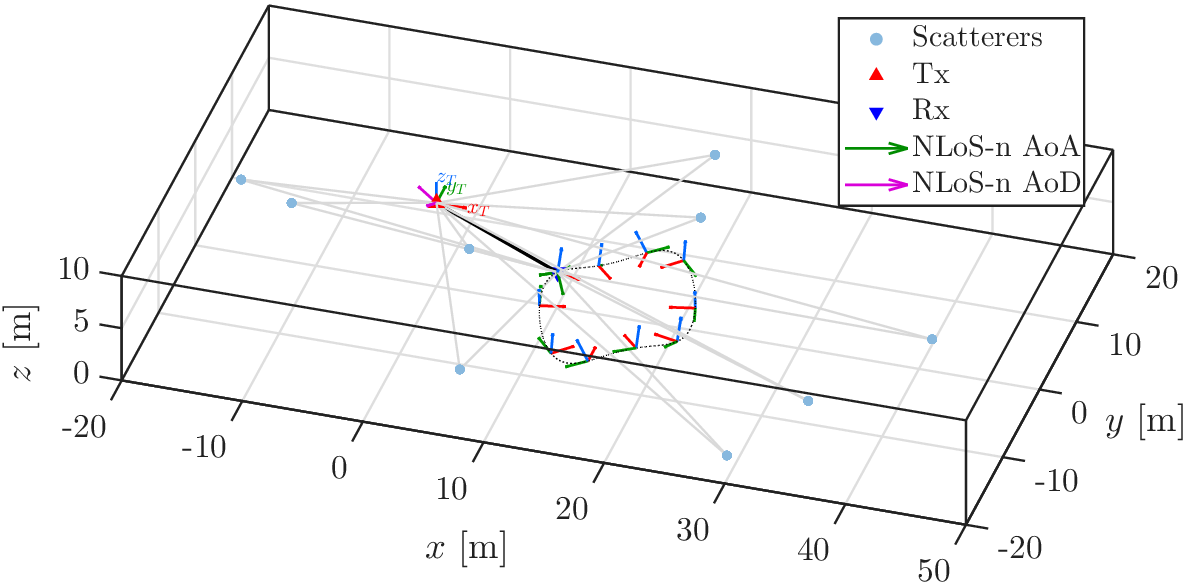}
        \caption{\small The 3-D simulation setup.}
        \label{simulation_trial_3D}
    \end{subfigure}
    \caption{\small The planar and 3-D simulation setup. In each trial, the NLoS-1 paths are generated with scattering points randomly distributed in the considered space, while the NLoS-n paths are characterized by uniformly distributed AoA and AoD angles.}
\end{figure}

\subsection{Simulation Setup}\label{Simulation Model and Data}

The proposed algorithm is evaluated using simulated measurements. In the planar scenario, the considered area is $20 \times 20$ m$^2$, with ${{\bf{p}}_{\rm{BS}}} = [0,0]$ m and ${\alpha _{\rm{BS}}} =  - \frac{\pi }{3}$ rad. The UE position is randomly selected from eight equally spaced points on a circle centered at $[10,10]$ m with radius $2\sqrt{2}$ m, and $\alpha_{\rm{UE}}$ is uniformly drawn from $[-\pi,\pi]$ rad.

For the general 3-D scenario, the considered space has size $70 \times 40 \times 10$ m$^3$, with ${{\bf{p}}_{{\rm{BS}}}} = [0,0,3]$ m and ${{\mathbf{R}}_{{\rm{BS}}}} = \mathbf{I}$. As shown in Fig. \ref{simulation_trial_3D}, eight UE states are sampled along a trajectory whose $XY$ projection is an ellipse centered at $[15,0,1]^{\operatorname{T}}$~m with semi-axes 8~m and 6~m, and whose height follows a sinusoid of amplitude 1~m over three periods. At each sampled location, the receiver frame is tangent to the trajectory.

For each accuracy level, 1000 Monte Carlo trials are generated. Each planar trial contains 8 inlier paths and 3 outlier paths, while each 3-D trial contains 9 inlier paths and 3 outlier paths. The proposed method is compared with three baselines where applicable: the MH-MLE approach in \cite{11046111}, the robust snapshot radio SLAM (RSR) method in \cite{10818978}, and the traditional snapshot radio SLAM method in \cite{9179819}. The MH-MLE and RSR baselines exploit calibrated path-loss or amplitude information for LoS/path-state handling, whereas the traditional method assumes NLoS-only propagation and does not perform LoS model selection. Since directly comparable published baselines for robust 3-D/6-D snapshot SLAM with unknown path identity are not readily available, external comparisons are conducted in the assumption-matched planar setting. The 3-D experiments validate the proposed full-pose formulation, threshold sensitivity, and final refinement.

For the $i$th path, the observation ${{\bf{z}}_i}$ is generated using \eqref{observe pdf}. The noise covariance matrix is
\begin{equation}\label{noise}
    {\mathbf{\Sigma }}_i = \gamma {\mathbf{\Sigma }}_{\ell^{(i)}}, \quad \ell^{(i)} \in \{\mathrm{LoS},\mathrm{NLoS}\text{-}1,\mathrm{NLoS}\text{-}n\},
\end{equation}
where ${\mathbf{\Sigma }}^{1/2}_{\mathrm{LoS}} = {\rm{diag}}([0.3,\tfrac{\pi}{180}\mathbf{1}_{1\times (k_d-1)}])$ \cite{10818978}, ${\mathbf{\Sigma }}^{1/2}_{\mathrm{NLoS}\text{-}n} = 2{\mathbf{\Sigma }}^{1/2}_{\mathrm{NLoS}\text{-}1}=4{\mathbf{\Sigma }}^{1/2}_{\mathrm{LoS}}$, and $\gamma  = [0.01,0.05,0.1,0.5,1]$ controls the measurement accuracy. We set ${b_{\rm{UE}}} = 10$ ns, $T_\epsilon = 1$, and $N_{\min} = 4$. For QAIC, $\hat c=\exp(\gamma)$ at the corresponding accuracy level. The planar angular resolution is $1^\circ$ ($|\mathcal{A}| = 360$), while the 3-D twist-swing traversal uses $\Delta_\psi=6^\circ$, $\Delta_\chi=12^\circ$, $\Delta_\beta=10^\circ$, and $\beta_{\max}=90^\circ$. Path means are generated according to their NLoS-1/NLoS-n labels, and path magnitudes follow the models in Appendix \ref{Appendix B}. One planar trial and one 3-D trial are shown in Figs. \ref{simulation_trial} and \ref{simulation_trial_3D}.

The evaluation metrics are defined consistently across the simulations. The UE position and clock-bias errors are $e_p=\|\hat{\mathbf p}_{\rm UE}-\mathbf p_{\rm UE}\|_2$ and $e_b=|\hat b_{\rm UE}-b_{\rm UE}|$. The orientation error is $e_R=|\operatorname{wrap}(\hat\alpha_{\rm UE}-\alpha_{\rm UE})|$ in 2-D and $e_R=\|\operatorname{Log}(\hat{\mathbf R}_{\rm UE}\mathbf R_{\rm UE}^{\top})^{\vee}\|_2$ in 3-D. Scatterer errors are computed after one-to-one nearest-neighbor matching between estimated and true NLoS-1 scatterers. The reported CDFs, P90 values, inlier F1 scores, and computation times are obtained over all Monte Carlo trials.

\subsection{Planar Simulation Results}

\subsubsection{Performance with Well-Calibrated Path Loss Model}\label{Simulation Results Analysis 2D}

\begin{figure*}[htbp] 
\centering
\begin{subfigure}[b]{1\textwidth}
\centering
\includegraphics[width=3.0in, height=0.2in]{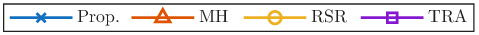} 
\end{subfigure}%
\vfill
\begin{subfigure}[b]{0.24\textwidth}
\centering
\includegraphics[width=1.6in, height=1.5in]{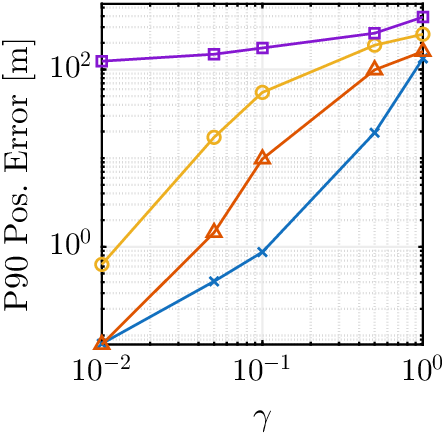} 
\subcaption{Position error}\label{pUE} 
\end{subfigure}%
\begin{subfigure}[b]{0.24\textwidth}
\centering
\includegraphics[width=1.6in, height=1.5in]{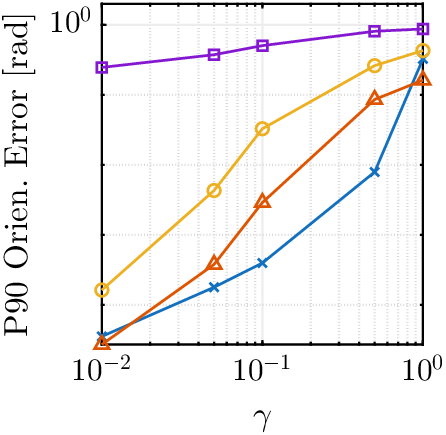}
\subcaption{Orientation error}\label{aUE} 
\end{subfigure}%
\begin{subfigure}[b]{0.24\textwidth}
\centering
\includegraphics[width=1.6in, height=1.5in]{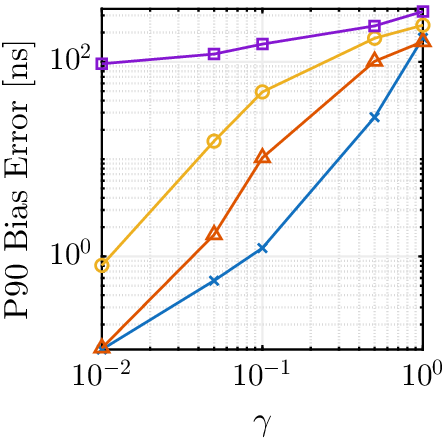}
\subcaption{Clock bias error}\label{bUE} 
\end{subfigure}
\begin{subfigure}[b]{0.24\textwidth}
\centering
\includegraphics[width=1.6in, height=1.5in]{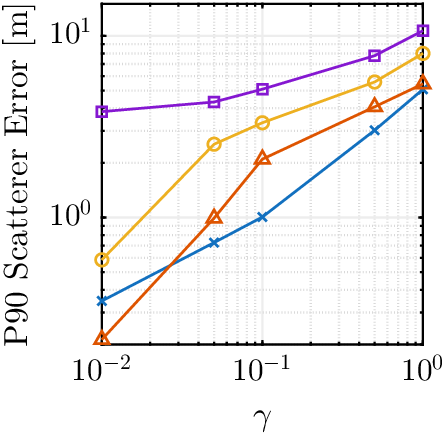}
\subcaption{Scatterer error}\label{sUE} 
\end{subfigure}
\caption{\small P90 errors of the considered methods in a planar scenario under different $\gamma$.} 
\label{LoS}
\end{figure*}

The empirical CDFs of the UE position, orientation, clock-bias, and scattering-point errors are evaluated in mixed conditions where the LoS path is present with probability $60\%$. The 90\% quantile (P90) is extracted for performance comparison. 
The methods listed in Section \ref{Simulation Model and Data} are compared under different noise levels $\gamma$, assuming that the path-loss model is well calibrated.  

Simulation results are illustrated in Fig. \ref{LoS}, where the figures give the P90 error versus $\gamma$. As expected, the P90 errors of all methods increase as $\gamma$ grows, since larger $\gamma$ corresponds to less accurate angle-delay observations. The proposed method achieves the lowest or comparable P90 errors in most cases, with a clear advantage in UE position, orientation, and clock-bias estimation over the low-to-moderate noise range. For scattering-point estimation, MH-MLE is slightly better at $\gamma=10^{-2}$, while the proposed method becomes more accurate as the measurement noise increases. This indicates that the unified angle-delay initialization provides a stable UE-state estimate, which in turn benefits the subsequent scattering-point refinement.

The traditional snapshot radio SLAM method exhibits the largest errors in most cases, which is consistent with its NLoS-only modeling assumption. In mixed propagation, forcing a LoS path into an NLoS-1 model introduces a structural mismatch and biases the UE-state estimate. The RSR and MH-MLE baselines exploit calibrated path-loss information for LoS handling and therefore improve over the traditional method under the well-calibrated setting, but their errors remain generally higher than those of the proposed method, especially as the measurement noise increases. The proposed method remains competitive or superior without amplitude-based LoS preclassification, because LoS identification is performed after robust amplitude-independent estimation rather than driven by a path-loss likelihood. This shows that the angle-delay consistency used in the coarse stage is sufficient to provide a reliable initialization and inlier set for the refinement stage.

\subsubsection{Performance in the Presence of Path Loss Model Uncertainty}\label{Fluctuation of path_loss_parameter_simulation}
\begin{figure*}[htbp] 
\centering
\begin{subfigure}[b]{0.24\textwidth}
\centering
\includegraphics[width=1.5in, height=1.3in]{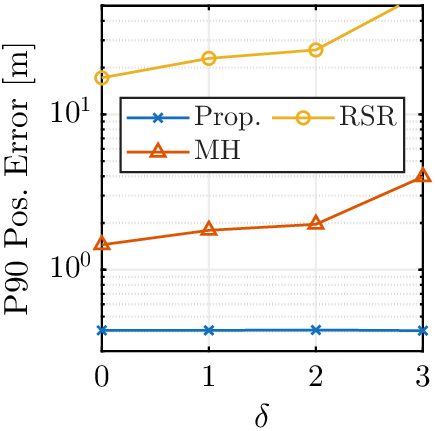} 
\subcaption{Position error}\label{pUE_flu} 
\end{subfigure}%
\begin{subfigure}[b]{0.24\textwidth}
\centering
\includegraphics[width=1.5in, height=1.3in]{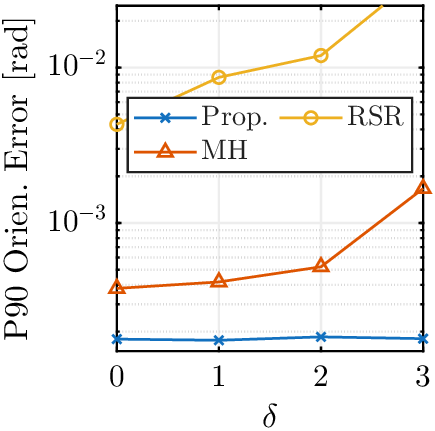}
\subcaption{Orientation error}\label{aUE_flu} 
\end{subfigure}
\begin{subfigure}[b]{0.24\textwidth}
\centering
\includegraphics[width=1.5in, height=1.3in]{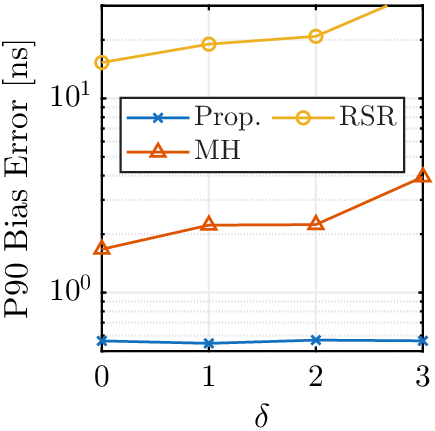}
\subcaption{Clock bias error}\label{bUE_flu}
\end{subfigure}
\begin{subfigure}[b]{0.24\textwidth}
\centering
\includegraphics[width=1.5in, height=1.3in]{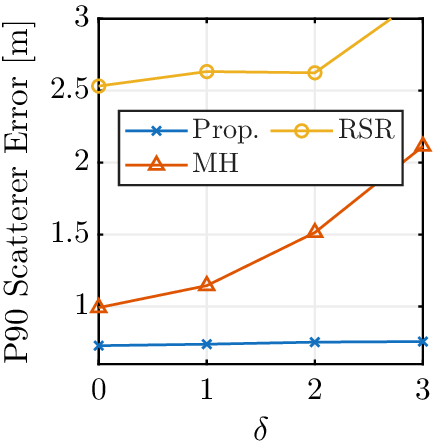}
\subcaption{Scatterer error}\label{sUE_flu}
\end{subfigure}
\caption{\small P90 errors of the considered methods in a planar scenario under different $\delta$.}
\label{uncertainty_pl}
\end{figure*}
This stress test is not intended to exhaust all robustness factors in practical channel estimation. Instead, it targets the calibration-mismatch mechanism that motivates the proposed amplitude-independent formulation.
To emulate calibration uncertainty, the path-loss parameters are modeled as random perturbations around their nominal values. Specifically, $L_0+\Delta L_0$, $\zeta+\Delta\zeta$, and $\sigma_{\rm{sh}}\exp(\Delta_{\sigma_{\rm{sh}}})$, where $\left[\Delta L_0,\Delta\zeta,\Delta_{\sigma_{\rm{sh}}}\right]^{\top}\sim\mathcal N(\mathbf{0},\mathbf{\Sigma}_{\rm pl})$ are sampled per trial. The perturbation levels are controlled by $\mathbf{\Sigma}_{\rm pl}^{1/2}=\rm{diag}\left(2^{\delta-1}\left[1,0.1,0.1\right]\right)$ with level factor $\delta$.
The simulations are conducted with $\gamma=10^{-1}$, where $\delta=0$ corresponds to the nominal path-loss model and larger $\delta$ indicates stronger calibration uncertainty. Fig. \ref{uncertainty_pl} compares the P90 errors of the proposed method and the amplitude-dependent baselines under different uncertainty levels. Since the proposed method does not use path-loss information for LoS preclassification or coarse-stage hypothesis generation, its curves remain nearly flat as $\delta$ increases. This behavior is observed consistently for UE position, orientation, clock bias, and scatterer estimation, confirming that the proposed angle-delay formulation is insensitive to the targeted path-loss-parameter mismatch.

In contrast, the performance of MH-MLE and RSR degrades when the path-loss model becomes less reliable. MH-MLE shows a gradual increase in all four error metrics, with the degradation becoming more evident at larger $\delta$. RSR is more sensitive in this experiment: its errors are already larger under the nominal model and increase further as the mismatch grows, especially for UE-state estimation. This is because perturbations in $L_0$, $\zeta$, and $\sigma_{\rm{sh}}$ distort the amplitude-based LoS likelihood and path-consistency evaluation, which can lead to incorrect path-identity decisions and poorer initialization. These results support the motivation of the proposed method: the robustness of amplitude-dependent methods deteriorates under the targeted path-loss-model calibration uncertainty, whereas the proposed amplitude-independent formulation preserves stable accuracy.

\subsubsection{LoS Detection Accuracy}\label{LoS Detection Accuracy}

\begin{table}[t]
\caption{LoS detection accuracy under different LoS rates}
\centering
\small
\setlength{\tabcolsep}{3.5pt}
\renewcommand{\arraystretch}{1.05}
\begin{tabular}{@{}c c c c c c c@{}}
\toprule
\multirow{2}{*}{$\bm{\gamma}$} 
  & \multicolumn{2}{c}{\textbf{70\% LoS Rate}} 
  & \multicolumn{2}{c}{\textbf{50\% LoS Rate}} 
  & \multicolumn{2}{c}{\textbf{30\% LoS Rate}} \\
\cmidrule(lr){2-3} \cmidrule(lr){4-5} \cmidrule(lr){6-7}
  & \textbf{Prop.} & \textbf{MH} 
  & \textbf{Prop.} & \textbf{MH} 
  & \textbf{Prop.} & \textbf{MH} \\
\midrule
$10^{-2}$ & 95.88\% & 89.01\% & 93.26\% & 85.06\% & 90.91\% & 80.95\% \\
$5\times 10^{-2}$ & 96.45\% & 88.50\% & 94.09\% & 84.33\% & 91.86\% & 80.04\% \\
$10^{-1}$  & 96.35\% & 88.20\% & 93.66\% & 83.81\% & 91.34\% & 79.25\% \\
$5\times 10^{-1}$  & 89.53\% & 86.43\% & 83.76\% & 80.73\% & 78.05\% & 74.78\% \\
$10^{0}$  & 84.70\% & 85.61\% & 77.16\% & 79.11\% & 69.44\% & 72.64\% \\
\bottomrule
\end{tabular}
\label{tab:los_detection}
\end{table}

Table \ref{tab:los_detection} compares the LoS detection performance of the different methods with an uncertainty factor of $\delta=3$. For $\gamma\le 5\times10^{-1}$, the proposed method consistently achieves higher LoS detection accuracy than MH-MLE under all tested LoS occurrence rates. This improvement is consistent with the design of the proposed refinement stage, where LoS identification is performed through information criteria after a robust amplitude-independent estimation, rather than relying on a potentially mismatched path-loss likelihood.

As $\gamma$ increases to $1$, the performance gap becomes small and MH-MLE is slightly better in all three LoS-rate cases. This indicates that when the geometric measurements are severely degraded, the advantage of amplitude-independent model selection is partially offset by the reduced reliability of the angle-delay residuals. Nevertheless, across most operating points in Table \ref{tab:los_detection}, especially in the relevant low-to-moderate noise regime, the proposed method provides more reliable LoS detection under path-loss-model uncertainty.

\subsubsection{Computation Time}\label{Computation Time}

\begin{table}[t]
\centering
\caption{Average runtime per planar trial}
\label{tab:Computation_Time}
\begin{tabular}{cccccc}
\toprule
\multirow{2}{*}{Case} & \multicolumn{3}{c}{\textbf{Prop.}} & \multirow{2}{*}{\textbf{MH}} & \multirow{2}{*}{\textbf{RSR}} \\ \cmidrule{2-4}
 & CF & LS & Total &  &  \\ \midrule
LoS & 0.5166s & 0.0114s  & 0.5280s & 0.2411s & 0.0966s \\
50\% LoS & 0.5152s & 0.0109s & 0.5260s & 0.5100s & 0.2283s \\
NLoS & 0.5141s & 0.0104s & 0.5245s & 0.7729s & 0.3582s \\ 
\bottomrule
\end{tabular}
\end{table}

The experiments are conducted on a desktop platform equipped with an Intel i9-14900K CPU and 64 GB of memory.
Table~\ref{tab:Computation_Time} gives the average computation time under different propagation conditions. The proposed method has a nearly constant runtime across the LoS, mixed, and NLoS cases, with a total computation time of $0.5245$--$0.5280$ s. This is because its robust coarse stage evaluates the same angle-delay consistency formulation regardless of whether the LoS path is present. The coarse stage (CF) dominates the cost, taking $0.5141$--$0.5166$ s, while the subsequent least-squares refinement (LS) only requires $0.0104$--$0.0114$ s. Hence, the refinement improves the final estimate with a small additional computational burden.

The baselines are faster in the pure LoS case, where reliable magnitude information reduces ambiguity, but their runtime increases as LoS information becomes less available. MH-MLE grows from $0.2411$ s to $0.7729$ s from LoS to NLoS, and RSR from $0.0966$ s to $0.3582$ s. Thus, the proposed method provides more stable cost and becomes faster than MH-MLE in the NLoS-dominant setting.

\subsection{General 3-D Simulation Results}\label{Performance Under Different Propagation Conditions for 3-D}

Since the external baselines are evaluated in the assumption-matched planar setting, the 3-D results examine whether the proposed full-pose pipeline remains stable under mixed propagation and how much the final LS refinement improves the coarse-formulation (CF) estimate.

\subsubsection{Sensitivity Evaluation}
\begin{figure}[t] 
\centering
\includegraphics[width=2.8in, height=1.4in]{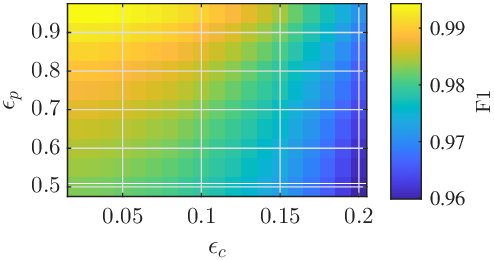}
\caption{\small 3-D inlier F1 sensitivity to feasibility thresholds $\varepsilon_c$ and $\varepsilon_p$, with $\gamma=0.1$.} 
\label{sen_pl}
\end{figure}

Fig. \ref{sen_pl} evaluates the sensitivity of the coarse inlier selection to the feasibility-check thresholds $\varepsilon_c$ and $\varepsilon_p$ in the 3-D scenario. The overall inlier F1 score remains high over the tested parameter range, staying above approximately $0.96$, which indicates that the proposed robust search is not overly sensitive to a single finely tuned threshold choice. The best performance appears in the region with smaller $\varepsilon_c$ and larger $\varepsilon_p$, where the LoS collinearity test is relatively strict and the NLoS-1 same-side consistency test requires stronger geometric agreement. In this region, the F1 score approaches $0.99$, showing that physically meaningful feasibility checks can effectively reject geometrically inconsistent hypotheses before the refinement stage.

As $\varepsilon_c$ increases, the F1 score gradually decreases, especially when $\varepsilon_p$ is small. This is expected because a larger collinearity threshold admits more ambiguous paths as LoS-consistent candidates, which can increase false inlier decisions in the presence of NLoS-n components. Conversely, increasing $\varepsilon_p$ improves the F1 score by making the NLoS-1 feasibility condition more selective. These results suggest that the feasibility check should be chosen moderately conservatively: strict enough to suppress physically inconsistent minimal-set hypotheses, but not so restrictive that valid noisy inliers are rejected. The smooth variation in Fig. \ref{sen_pl} also shows that the proposed pipeline has a broad stable operating region for practical threshold selection.

\subsubsection{Estimation Evaluation}

\begin{figure*}[htbp] 
\centering
\begin{subfigure}[b]{1\textwidth}
\centering
\includegraphics[width=5.0in, height=0.16in]{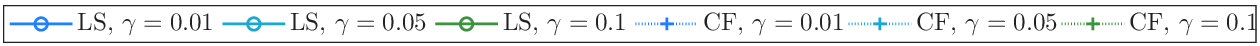} 
\end{subfigure}%
\vfill
\begin{subfigure}[b]{0.24\textwidth}
\centering
\includegraphics[width=1.6in, height=1.5in]{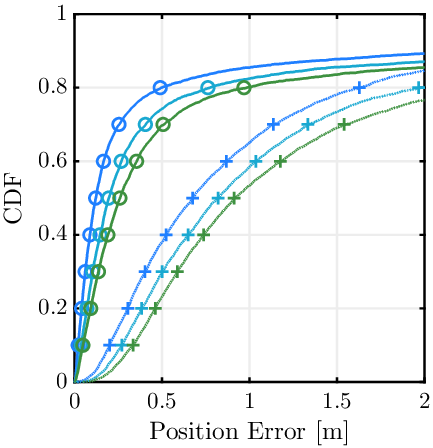} 
\subcaption{Position error}\label{pUE_3d} 
\end{subfigure}%
\begin{subfigure}[b]{0.24\textwidth}
\centering
\includegraphics[width=1.6in, height=1.5in]{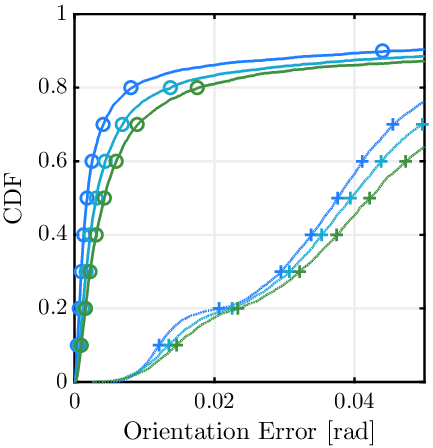}
\subcaption{Orientation error}\label{aUE_3d} 
\end{subfigure}%
\begin{subfigure}[b]{0.24\textwidth}
\centering
\includegraphics[width=1.6in, height=1.5in]{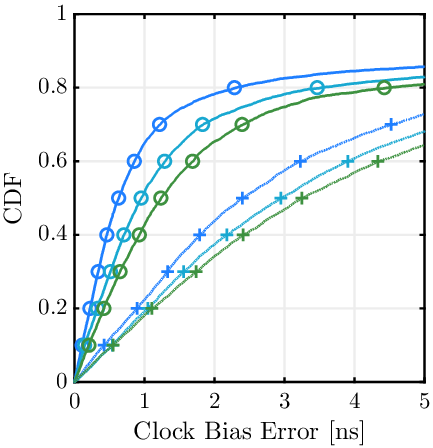}
\subcaption{Clock bias error}\label{bUE_3d} 
\end{subfigure}
\begin{subfigure}[b]{0.24\textwidth}
\centering
\includegraphics[width=1.6in, height=1.5in]{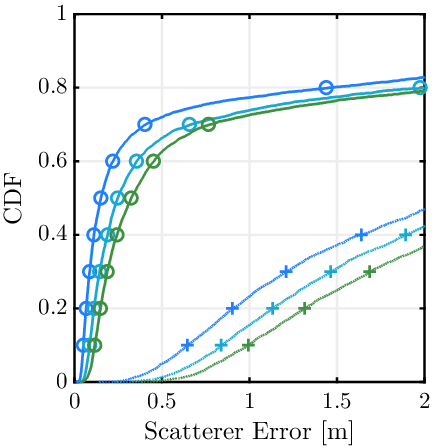}
\subcaption{Scatterer error}\label{sUE_3d} 
\end{subfigure}
\caption{\small Error CDFs of the proposed algorithm in the 3-D mixed-propagation scenario.} 
\label{LoS_3D}
\end{figure*}

The CDFs of the estimation errors are evaluated under mixed propagation conditions where the LoS path is present with probability $50\%$, with multiple noise levels configured ($\gamma  = [0.01,0.05,0.1]$). The evaluation includes the UE position, orientation, clock bias, and scattering-point errors. The metrics follow the definitions in Section \ref{Simulation Model and Data}.
The results are shown in Fig. \ref{LoS_3D}, where the coarse-formulation estimate (CF) and the final least-squares refinement (LS) are compared under the same mixed propagation setting. For all evaluated quantities, increasing $\gamma$ shifts the CDF curves to the right, which is consistent with the larger perturbations in the AoA, AoD, and delay measurements. Nevertheless, the curves remain ordered with respect to the noise level, showing that the proposed 3-D formulation degrades smoothly rather than failing abruptly as the measurement quality decreases.

Compared with the CF estimate, the LS refinement consistently shifts the error distributions to the left. For UE position and clock bias, the CF stage provides a usable initialization, while the LS stage further reduces the residual geometric mismatch. The improvement is especially visible for orientation and scattering-point estimation: the CF orientation errors remain broadly distributed, whereas the LS curves concentrate close to zero for all tested $\gamma$; similarly, the scatterer-error CDFs after LS refinement rise much faster than their CF counterparts. This indicates that, once the coarse stage has identified a reliable inlier set and an approximate UE state, fitting the physical LoS/NLoS-1 measurement model can jointly improve both the 6-D UE pose and the reconstructed landmarks.

These results verify the role of the two-stage design in the general 3-D scenario. The CF stage provides a robust amplitude-independent initialization under unknown path identity, and the LS stage uses this initialization to jointly refine the UE pose, clock bias, scattering points, and path interpretation. Hence, the proposed pipeline remains effective for full 6-D pose estimation and mapping, not only for the planar special case.

\section{Conclusion}\label{Conclusion}
This paper proposed an amplitude-independent robust snapshot radio SLAM method based on a unified angle-delay formulation for LoS and NLoS-1 inlier paths. The proposed coarse stage is written directly in the UE state and angle-delay observations, avoiding amplitude-based LoS preclassification and path-wise latent variables in the solver state. To handle unknown path identity and NLoS-n outliers, the coarse solver was embedded in a consensus-based minimal-set search with geometric feasibility checks. The formulation was further developed for general 3-D/6-D pose and clock-bias estimation, where the UE orientation is initialized by twist-swing two-stage traversal and refined locally on $SO(3)$. Building on the coarse estimate, a Jacobian-row-equilibrated IRLS refinement with QAIC-based model comparison jointly refines the UE state, detects the LoS path, and estimates scattering points. We also analyzed formulation-specific local-rank properties and derived minimal-set implications for the proposed initialization. Simulation results show that the proposed method remains competitive with calibrated amplitude-dependent baselines, provides robust LoS/inlier handling under the targeted path-loss-model mismatch, and remains effective in the general 3-D scenario.
\appendices
\section{Derivation of Derivatives for Iterative Estimation}\label{Appendix A}
If ${\bf{h}}_k^{(m)}$ is a LoS path in a hypothesis, we have $\nabla_{{{\bf{p}}_{{{\cal I}_m}}}}=\mathbf{0}_{1\times 3}$, ${\nabla _{{b_{{\rm{UE}}}}}}{\bf{h}}_k^{(m)} = [1,\mathbf{0}_{1\times 4}]^\top$, 
\begin{equation}
{\nabla _{{{\bf{p}}_{{\rm{UE}}}}}}{{\bf{h}}_{{\rm{LoS}}}} = \left[ {\begin{array}{*{20}{c}}
{\frac{1}{c}\frac{{{{\bf{t}}^ \top }}}{{\left\| {\bf{t}} \right\|}}}
\\
\frac{(\mathbf{R}_{\rm{BS}}^{\top}\mathbf{t})^{\top}[-\mathbf{e}_2,\mathbf{e}_1,\mathbf{0}]\mathbf{R}_{\rm{BS}}^{\top}}
{\|[\mathbf{e}_1,\mathbf{e}_2]^ \top \mathbf{R}_{\rm{BS}}^{\top}\mathbf{t}\|^2}
\\
\frac{{\bf{e}}_3^ \top {\bf{R}}_{{\rm{BS}}}^ \top \left[\|{\bf{t}}\|^2\mathbf{I}-{\bf{t}}{\bf{t}}^{\top}\right] }{\left\|[\mathbf{e}_1,\mathbf{e}_2]^ \top \mathbf{R}_{\rm{BS}}^{\top}\mathbf{t}\right\|\left\| {{\bf{t}}} \right\|^2}
\\
\frac{(\mathbf{R}_{\rm{UE}}^{\top}\mathbf{t})^{\top}[\mathbf{e}_2,-\mathbf{e}_1,\mathbf{0}]\mathbf{R}_{\rm{UE}}^{\top}}
{\|[\mathbf{e}_1,\mathbf{e}_2]^ \top \mathbf{R}_{\rm{UE}}^{\top}\mathbf{t}\|^2}
\\
-\frac{{\bf{e}}_3^ \top {\bf{R}}_{{\rm{UE}}}^ \top \left[\|{\bf{t}}\|^2\mathbf{I}-{\bf{t}}{\bf{t}}^{\top}\right] }{\left\|[\mathbf{e}_1,\mathbf{e}_2]^ \top \mathbf{R}_{\rm{UE}}^{\top}\mathbf{t}\right\|\left\| {{\bf{t}}} \right\|^2}
\end{array}} \right],
\end{equation}
and
\begin{equation}
{\nabla _{{{\bf{R}}_{{\rm{UE}}}}}}{{\bf{h}}_{{\rm{LoS}}}} = \left[ {\begin{array}{*{20}{c}}
{{{\bf{0}}_{3 \times 3}}}
\\
\frac{(\mathbf{R}_{\rm{UE}}^{\top}\mathbf{t})^{\top}[-\mathbf{e}_2,\mathbf{e}_1,\mathbf{0}]\mathbf{R}_{\rm{UE}}^{\top}\left[\mathbf{t}\right]_{\times}}
{\|[\mathbf{e}_1,\mathbf{e}_2]^ \top \mathbf{R}_{\rm{UE}}^{\top}\mathbf{t}\|^2}
\\
-\frac{\|\mathbf{t}\|{\bf{e}}_3^ \top {\bf{R}}_{{\rm{UE}}}^ \top \left[\mathbf{t}\right]_{\times}}{\left\|[\mathbf{e}_1,\mathbf{e}_2]^ \top \mathbf{R}_{\rm{UE}}^{\top}\mathbf{t}\right\|}
\end{array}} \right],
\end{equation}
Otherwise, if ${\bf{h}}_k^{(m)}$ is a NLoS-1 path in a hypothesis, we have ${\nabla _{{b_{{\rm{UE}}}}}}{\bf{h}}_k^{(m)} = [1,\mathbf{0}_{1\times 4}]^\top$, 
\begin{equation}
{\nabla _{{{\bf{p}}_{{{\cal I}_m}}}}}{{\bf{h}}_{{\rm{NLoS - 1}}}} = 
\begin{bmatrix}
\frac{1}{c}{{\left( {\tfrac{{{\bf{t}}_{{{\cal I}_m}}^ \vee }}{{\left\| {{\bf{t}}_{{{\cal I}_m}}^ \vee } \right\|}} + \tfrac{{{\bf{t}}_{{{\cal I}_m}}^ \wedge }}{{\left\| {{\bf{t}}_{{{\cal I}_m}}^ \wedge } \right\|}}} \right)}^ \top }
\\
\frac{(\mathbf{R}_{\rm{BS}}^{\top}\mathbf{t}_{\mathcal{I}_m}^{\wedge})^{\top}[-\mathbf{e}_2,\mathbf{e}_1,\mathbf{0}]\mathbf{R}_{\rm{BS}}^{\top}}
{\|[\mathbf{e}_1,\mathbf{e}_2]^ \top \mathbf{R}_{\rm{BS}}^{\top}\mathbf{t}_{\mathcal{I}_m}^{\wedge}\|^2}
\\
\frac{{\bf{e}}_3^ \top {\bf{R}}_{{\rm{BS}}}^ \top \left[\|{\bf{t}}_{{{\cal I}_m}}^ \wedge\|^2\mathbf{I}-({\bf{t}}_{{{\cal I}_m}}^ \wedge)({\bf{t}}_{{{\cal I}_m}}^ \wedge)^{\top}\right] }{\left\|[\mathbf{e}_1,\mathbf{e}_2]^ \top \mathbf{R}_{\rm{BS}}^{\top}\mathbf{t}_{\mathcal{I}_m}^{\wedge}\right\|\left\| {{\bf{t}}_{{{\cal I}_m}}^ \wedge } \right\|^2} 
\\
\frac{(\mathbf{R}_{\rm{UE}}^{\top}\mathbf{t}_{\mathcal{I}_m}^{\vee})^{\top}[-\mathbf{e}_2,\mathbf{e}_1,\mathbf{0}]\mathbf{R}_{\rm{UE}}^{\top}}
{\|[\mathbf{e}_1,\mathbf{e}_2]^ \top \mathbf{R}_{\rm{UE}}^{\top}\mathbf{t}_{\mathcal{I}_m}^{\vee}\|^2}
\\
-\frac{{\bf{e}}_3^ \top {\bf{R}}_{{\rm{UE}}}^ \top \left[\|{\bf{t}}_{{{\cal I}_m}}^ \vee\|^2\mathbf{I}-({\bf{t}}_{{{\cal I}_m}}^ \vee)({\bf{t}}_{{{\cal I}_m}}^ \vee)^{\top}\right] }{\left\|[\mathbf{e}_1,\mathbf{e}_2]^ \top \mathbf{R}_{\rm{UE}}^{\top}\mathbf{t}_{\mathcal{I}_m}^{\vee}\right\|\left\| {{\bf{t}}_{{{\cal I}_m}}^ \vee } \right\|^2}
\end{bmatrix},
\end{equation}
\begin{equation}
{\nabla _{{{\bf{p}}_{{\rm{UE}}}}}}{{\bf{h}}_{{\rm{NLoS - 1}}}} = 
\begin{bmatrix}
 - \frac{1}{c}\frac{{{{\left( {{\bf{t}}_{{{\cal I}_m}}^ \vee } \right)}^ \top }}}{{\|{\bf{t}}_{{{\cal I}_m}}^ \vee }\|}
\\
{{{\bf{0}}_{2 \times 3}}}
\\
\frac{(\mathbf{R}_{\rm{UE}}^{\top}\mathbf{t}_{\mathcal{I}_m}^{\vee})^{\top}[\mathbf{e}_2,-\mathbf{e}_1,\mathbf{0}]\mathbf{R}_{\rm{UE}}^{\top}}
{\|[\mathbf{e}_1,\mathbf{e}_2]^ \top \mathbf{R}_{\rm{UE}}^{\top}\mathbf{t}_{\mathcal{I}_m}^{\vee}\|^2}
\\
-\frac{{\bf{e}}_3^ \top {\bf{R}}_{{\rm{UE}}}^ \top \left[\|{\bf{t}}_{{{\cal I}_m}}^ \vee\|^2\mathbf{I}-({\bf{t}}_{{{\cal I}_m}}^ \vee)({\bf{t}}_{{{\cal I}_m}}^ \vee)^{\top}\right] }{\left\|[\mathbf{e}_1,\mathbf{e}_2]^ \top \mathbf{R}_{\rm{UE}}^{\top}\mathbf{t}_{\mathcal{I}_m}^{\vee}\right\|\left\| {{\bf{t}}_{{{\cal I}_m}}^ \vee } \right\|^2}
\end{bmatrix},
\end{equation}
and
\begin{equation}
{\nabla _{{{\bf{R}}_{{\rm{UE}}}}}}{{\bf{h}}_{{\rm{NLoS - 1}}}} = 
\begin{bmatrix}
{{{\bf{0}}_{3 \times 3}}}
\\
\frac{(\mathbf{R}_{\rm{UE}}^{\top}\mathbf{t}_{\mathcal{I}_m}^{\vee})^{\top}[-\mathbf{e}_2,\mathbf{e}_1,\mathbf{0}]\mathbf{R}_{\rm{UE}}^{\top}\left[\mathbf{t}_{\mathcal{I}_m}^{\vee}\right]_{\times}}
{\|[\mathbf{e}_1,\mathbf{e}_2]^ \top \mathbf{R}_{\rm{UE}}^{\top}\mathbf{t}_{\mathcal{I}_m}^{\vee}\|^2}
\\
-\frac{\|\mathbf{t}_{\mathcal{I}_m}^{\vee}\|{\bf{e}}_3^ \top {\bf{R}}_{{\rm{UE}}}^ \top \left[\mathbf{t}_{\mathcal{I}_m}^{\vee}\right]_{\times}}{\left\|[\mathbf{e}_1,\mathbf{e}_2]^ \top \mathbf{R}_{\rm{UE}}^{\top}\mathbf{t}_{\mathcal{I}_m}^{\vee}\right\|}
\end{bmatrix}.
\end{equation}
where ${\bf{t}}_{\mathcal{I}_m}^{\vee} = ({{\bf{p}}_{{{\cal I}_m}}} - {{\bf{p}}_{{\rm{UE}}}})$ and ${\bf{t}}_{\mathcal{I}_m}^{\wedge} = ({{\bf{p}}_{{{\cal I}_m}}} - {{\bf{p}}_{{\rm{BS}}}})$.

Further, in a planar scenario, the Jacobian matrix degenerates into $\left[\nabla_{\mathbf{\Theta}^{(m)}}\mathbf{h}^{(m)}_k\right]_{(\mathbf{s}_r,\mathbf{s}_c)}$, where the operator $\left[\mathbf{A}\right]_{(\mathbf{s}_r,\mathbf{s}_c)}$ returns a submatrix formed by the rows and columns of $\mathbf{A}$ that are selected by the vectors $\mathbf{s}_r=[1,2,4]^\top$ and $\mathbf{s}_c=[3]$, respectively.
\section{Generation of NLoS Paths}\label{Appendix B}
The NLoS-1 path is generated with its scattering point $\mathbf{p}_i$, which is randomly generated within the space. Consequently, the angle and path delay are computed using \eqref{observe mean}.
On the other hand, the NLoS-n path is generated with uniformly distributed angles, that is,
\begin{equation}\label{uniform1}
    {\phi _{\rm{NLoS - n},a}},{\theta _{\rm{NLoS - n},a}} \sim \mathcal{U}(0,2\pi ),
\end{equation}
\begin{equation}\label{uniform2}
    {\phi _{\rm{NLoS - n},e}},{\theta _{\rm{NLoS - n},e}} \sim \mathcal{U}( - \tfrac{\pi }{2},\tfrac{\pi }{2}).
\end{equation}
Since an NLoS-$n$ path may result from multiple bounces, it can exhibit a longer propagation distance and broader delay dispersion. To emulate this phenomenon in a synthetic outlier-delay model, we use a skew-normal random variable \cite{skewnormal} to construct the NLoS-$n$ path delay, i.e., ${\tau _{\rm{NLoS - n}}} \sim \mathcal{SN}(\mu ,\sigma ,\alpha)$, 
where $\alpha$ is the shape parameter, $\mu$ and $\sigma$ specify the resulting mean and standard deviation. In our simulation setup, we set $\mu$ to the time-delay mean of all inliers, choose $\sigma=100$ ns and $\alpha=-4$ to introduce asymmetry in the synthetic NLoS-$n$ outlier-delay model.  

The amplitude $\xi_i = \sqrt{\eta_i}$ of each NLoS path (either NLoS-1 or NLoS-n) is generated with total NLoS power and the exponential decay model as $
{\eta_{i}} = \tfrac{{\eta'_{i}}}{{\sum\nolimits_{k = 1}^{N'} {\eta_{k}^{'}} }} \times {\eta_{\rm{NLoS}}}$,
where ${\eta'_{i}} = 10^{ \frac{{S}}{10 }}{e^{ - \frac{{{\tau _i}}}{\beta }}}$, $S$ is the per-path shadowing and $S \sim{\cal N}({\mu _S},{\sigma_{S}^2})$, $N'$ denotes the total number of NLoS paths, and $\eta_{\rm{NLoS}}$ represents the total NLoS power. We set $\beta =31.4$ ns, $\mu_S=0$ dB and $\sigma_S=9.4$ dB \cite{7037347} in the simulation setup.

In the case of the LoS scenario, the total NLoS power $\eta_{\rm{NLoS}}$ is specified by ${\eta_{\rm{NLoS}}} = \frac{{\eta_{\rm{LoS}}}}{10^{K/10}}$,
where the K-factor is assumed $K \sim \mathcal{N}({\mu_{k}},{\sigma_{k}^2})$, and we set $\mu_k=7$ dB and $\sigma_k=4$ dB following \cite{zhu20213gpp}. 
For the power ${\eta^{\rm{dB}}}$, we use the logarithmic distance path loss model, that is,
${\eta^{\rm{dB}}} \sim \mathcal{N}(f({{\bf{p}}_{\rm{UE}}}),{\sigma_{\rm{sh}} ^{2}})$,
where $f({{\bf{p}}_{\rm{UE}}}){\rm{ }} = -{L_0} - {\rm{ }}10\zeta \log_{10}(\| {{\bf{p}}_{\rm{BS}}} - {{\bf{p}}_{\rm{UE}}}\|)$, $L_0$ is the reference path loss at a distance of one meter and $\zeta$ is the path loss coefficient. 
In LoS scenario, we use the path loss coefficient $\zeta= 1.7$, standard deviation $\sigma_{\rm{sh}}=1.8$ dB, and the $L_0 = $ 13 dB.
Otherwise, in the case of the NLoS scenario, we set $L_0 = $ 13 dB, $\zeta=3.3$ dB, and $\sigma_{\rm{sh}}= 4.6$ dB \cite{10818978}, \cite{8496691}.

\bibliographystyle{IEEEtran}
\bibliography{mylib}

\end{document}